\documentclass[conference]{IEEEtran}
\IEEEoverridecommandlockouts

\usepackage[utf8]{inputenc} 
\usepackage[T1]{fontenc}
\usepackage{url}
\usepackage{ifthen}
\usepackage{cite}
\usepackage[cmex10]{amsmath} 
\usepackage{amssymb}
\usepackage{graphicx}
\usepackage{epstopdf}
\usepackage{epsfig}
\usepackage{multirow}
\usepackage{amsthm}
\usepackage{comment}
\usepackage{caption}
\usepackage{enumitem}
\usepackage{amsmath}
\usepackage{color}
\usepackage{bbm}
\usepackage{algorithm}
\usepackage{algpseudocode}
\usepackage{csquotes}
\usepackage{subfiles}
\usepackage{caption, multirow, makecell}
\usepackage{array}
\usepackage{caption}
\usepackage{afterpage}
\usepackage{dblfloatfix}
\usepackage{booktabs}
\usepackage{tikz}
\usepackage{longtable}
\usepackage{blkarray}
\usepackage{adjustbox}
\newtheorem{thm}{Theorem}
\newtheorem{lem}{Lemma}

\newtheorem{cor}{Corollary}
\newtheorem{example}{Example}

\newtheorem{defn}{Definition}

\newtheorem{cons}{Construction}

\newtheorem{rem}{Remark}

\def\BibTeX{{\rm B\kern-.05em{\sc i\kern-.025em b}\kern-.08em
		T\kern-.1667em\lower.7ex\hbox{E}\kern-.125emX}}
\begin{document}

\title{Function-Correcting Codes for Sum-Rank Metric}

\author{\IEEEauthorblockN{Santhi Kumari Kammila and B. Sundar Rajan, {\it Life Fellow IEEE}} 
	\IEEEauthorblockA{\textit{Department of Electrical Communication Engineering,}
		\textit{Indian Institute of Science,}
		Bangalore, India \\
		\{santhik,bsrajan\}@iisc.ac.in}
}

\maketitle

\begin{abstract}
 Function-Correcting Codes (FCCs) are a class of codes designed to protect the evaluation of a specific function of a message against channel errors at a higher level than the level of protection for the message, while requiring significantly less redundancy than conventional error-correcting codes. In this paper, we study function-correcting codes under the sum-rank metric, which is a natural generalization of both the Hamming metric and the  rank-metric and also we derive general upper and lower bounds on the optimal redundancy of FCCs in the sum-rank metric. In particular, we establish a Plotkin-like  bound for irregular-distance codes in sum-rank metric and simplify it for linear functions. Furthermore, we present explicit construction of function-correcting sum-rank metric codes (FCSRCs) for locally binary functions and sum-rank weight functions with optimal redundancy. 
 \end{abstract}

\begin{IEEEkeywords}
function-correcting codes, optimal redundancy, Plotkin-like bound,  rank-metric, sum-rank metric, .
\end{IEEEkeywords}

\section{Introduction}
The coding framework for function-correcting codes (FCC) is introduced in \cite{b1} to protect a specific attribute (function) of the message that is of interest to the receiver  along with the entire message with reduced redundancy. When the function to be protected is a bijective mapping, the FCC coincides with  classical error-correcting code (ECC). In \cite{b1} FCCs have been studied for the Hamming metric. We study FCCs for channels matched to the sum-rank metric calling them Function-Correcting Sum-Rank Codes (FCSRCs). These are particularly beneficial in multi-shot network coding and distributed storage systems where classical sum-rank metric codes are utilized, as FCSRCs enable reliable recovery of specific function of interest with reduced redundancy compared to classical sum-rank metric codes. Rank-metric codes and sum-rank metric codes are reviewed in the following subsection and a brief review of the state-of-art works on FCCs in various settings is given in the subsequent subsection. 
\subsection{Sum-rank metric codes}
  In \cite{b2} and \cite{b3} the authors introduce rank-metric codes as a natural and powerful framework for error control in random linear network coding. Network transmission errors are modeled as additive matrix perturbations and it is shown  that the rank of the error matrix captures the effect of adversarial errors in network-coded systems. By lifting rank-metric codes to subspace codes, a direct connection between rank distance and subspace distnace is established providing a rigorous justification for using rank-metric codes to achieve optimal error correction performance in random network coding. Gabidulin introduced a fundamental class of linear codes called Maximum Rank Distance (MRD) codes defined over extension fields that achieve the maximum possible distance under the rank-metric, now known as Gabidulin codes \cite{b4}. These are analogous to maximum distance separable (MDS) codes in the Hamming metric and  attain the Singleton bound for the rank-metric and are therefore optimal. In \cite{b4} an algebraic construction of such codes based on linearized polynomials is presented and efficient decoding algorithms for correcting rank errors are developed. This work is further extended to symmetric rank-metric codes in \cite{b5} in which the authors showed that for extension fields of characteristic 2, the field can be represented by symmetric matrices, which leads to the construction of MRD codes consisting entirely of symmetric matrices. These codes achieve maximum rank distance and their vector representations correspond to linear MRD Codes.
  
  Codes over sum-rank metric are collection of vectors of matrices and the metric is obtained by adding up the ranks of all matrices. It measures errors by summing the rank errors of multiple blocks. Sum-rank metric codes generalize Hamming and rank-metric codes. They have several applications in information theory, including multishot network coding and distributed storage systems. In \cite{b6}, the authors presented a comprehensive study of codes in the sum-rank metric, covering their theoretical foundations, constructions, bounds, and applications. In \cite{b7}, the authors studied a special class of codes called Maximum Sum-Rank Distance Codes (MSRD). The fundamental properties of sum-rank metric codes are studied in \cite{b8}. Fundamental bounds such as Singleton-type bound are derived, and optimal codes achieving these bounds such as MSRD codes are characterized and explicit constructions of them are also given.
 
 \subsection{Function-Correcting Codes (FCC): State-of-art}
  In \cite{b1}, FCCs are developed for symbol substitution channel that are matched to the Hamming  metric. FCCs over symbol pair read channels are studied  in \cite{b9} which enable reliable recovery of function of interest despite pair wise read errors. Code constructions and bounds for symbol-pair metric are also presented. This is further extended to FCC over $b$-symbol read channel in \cite{b10} where $b$ consecutive (overlapping) symbols together appear in each read. Furthermore, Plotkin-like bounds on FCCs for $b$-symbol read channel and symbol-pair read channel are provided in \cite{b11}. In \cite{b12}, tighter lower and upper bounds on redundancy of FCCs especially for a class of linear fucntions are derived. In \cite{b13}, the authors presented results on optimal redundancy and efficient code constructions for functions like Hamming weight and Hamming weight distribution functions. FCCs for locally bounded functions are studied in \cite{b14} and it is further extended to $b$-symbol read channels in \cite{b15}. Lower and upper bounds on redundancy of FCCs over finite fields are derived in \cite{b16} and it is shown that these bounds holds for all finite fields. FCCs are also studied  over a metric called homogenous distance in \cite{b17} which is defined over finite rings that generalize the Hamming and Lee metric. Similarly, a theoretical framework for FCCs for channels matched to the Lee metric has been studied in \cite{b18} and  lower and upper bounds on optimal redundancy for Function-Correcting Lee Codes(FCLCs) are proposed in \cite{b19}. In \cite{b20}, the authors introduced FCCs for insertion–deletion channels where symbols are inserted or deleted from a transmitted message and derived tight redundancy bounds along with constructions showing how to reliably compute specific functions under insertions and deletions. Recently, a more general framework has been  introduced where FCCs provide data protection also along with function protection while minimizing the redundancy compared to classical error-correcting codes\cite{b21}.
\subsection{Contributions}
The main contributions of this paper are summarized below. 
\begin{itemize}
    \item FCCs for sum-rank metric are introduced. A Plotkin-like lower bound for the optimal redundancy of FCCs for the sum-rank metric is derived. The known Plotkin-like bound for FCCs for the Hamming metric and the known Plotkin-like bound for rank-metric codes are recovered as special cases. 
    \item The derived Plotkin-like bound is further simplified for linear functions for function-correcting sum-rank metric codes (FCSRCs).
	\item Explicit construction of FCSRCs are given for sum-rank locally binary functions and sum-rank weight functions and these FCSRCs are shown to be optimal.
\end{itemize} 
\subsection{Organization}
The rest of the paper is organized as follows. In Section~\ref{preliminaries}, we review sum-rank metric codes, FCCs, and irregular-sum-rank-distance codes. In Section~\ref{General  Results on the Optimal Redundancy}, the connection between FCCs for sum-rank metric and irregular-sum-rank-distance codes, and bounds on the optimal redundancy of FCCs for sum-rank metric are given. A Plotkin-like lower bound on block length of irregular-sum-rank-distance codes is also proposed in this section. Furthermore, a Plotkin bound for linear functions is derived. In Section~\ref{Sum-rank locally binary Functions}, we apply these general results to a specific class of functions called  sum-rank locally binary functions providing explicit code construction and corresponding optimal redundancy.  In Section \ref{Sum-rank weight Functions} FCCs for a new class of functions called sum-rank weight function is studied. Finally, Section~\ref{Conclusion} concludes the paper. 
\subsection{Notations}
$\mathbb{F}_q^{m \times m}$ denotes set of all $m \times m$ matrices over $\mathbb{F}_q$ the finite field with $q$ elements. The set of all $k$ length vectors of $m \times m$ matrices is denoted by $(\mathbb{F}_q^{m \times m})^k$. The rank of a matrix $\mathbf{C} \in \mathbb{F}_q^{m \times m}$ is denoted by rk($\mathbf{C}$) and srk($\mathbf{C}$) denotes the sum-rank of the vector $\mathbf{C} \in (\mathbb{F}_q^{m \times m})^k$. Given two matrices $\mathbf{A},\mathbf{B} \in \mathbb{F}_q^{m \times m}$, $d_{rk}(\mathbf{A},\mathbf{B})$ denotes the rank distance between $\mathbf{A}$ and $\mathbf{B}$. Given two vectors of matrices $\mathbf{a}, \mathbf{b} \in (\mathbb{F}_q^{m \times m})^k$, $d_{srk}(\mathbf{a}, \mathbf{b})$ denotes the sum-rank distance between $\mathbf{a}$ and $\mathbf{b}$. For any integer $M$, we define $[M]^+ \triangleq \max\{M,0\}$ and let $[M] \triangleq \{1,\ldots,M\}$. For a matrix $\mathbf{D}$, we denote by $[\mathbf{D}]_{ij}$ the $(i,j)$th entry of $\mathbf{D}$. The set of all natural numbers is denoted by  $\mathbb{N}$  and $\mathbb{N}_{0}$ denotes the set of all non-negative integers. Futher, $\mathbf{I}_m$ denotes the identity matrix of size $m \times m$ and $\boldsymbol{0}_{m}$ denotes a zero matrix of size $m \times m$.
\section{Preliminaries}
\label{preliminaries}
In this section, we review some definitions and basic concepts related to  sum-rank metric codes and FCCs.
\subsection{Sum-rank metric codes} 
  \begin{defn}[Sum-rank weight{\cite{b7}}]
 For $q,$  a prime power and  $m \ge n$ being positive integers, let $\mathbb{F}_q^{m \times n}$ be the set of all $m \times n$ matrices with entries from the finite field $\mathbb{F}_q$.  For a matrix $\mathbf{M} \in \mathbb{F}_q^{m \times n}$, we denote its rank by $\mathrm{rk}(\mathbf{M})$, and let \[\mathbb{M} = \mathbb{F}_q^{m_1 \times n_1} \times \cdots \times \mathbb{F}_q^{m_\ell \times n_\ell}\] be the $\mathbb{F}_q$-linear vector space obtained as the cartesian product of matrix spaces, where $\ell, m_1, \ldots, m_\ell, n_1, \ldots, n_\ell$ are positive integers with  \[m_1 \ge \cdots \ge m_\ell \quad \text{and} \quad n_i \le m_i \ \text{for all } i \in [\ell].\]Moreover, $n = n_1 + \cdots + n_\ell.$ If $m_1 = \cdots = m_\ell$, we simply write $m$ in place of $m_i$.

  For an element $\mathbf{c}\in\mathbb{M}$ written as $\mathbf{c} = (C_1, \ldots , C_\ell)$ (a vector of matrices), with $C_i\in\mathbb{F}_q^{m_i\times n_i}$ for $i\in[\ell]$, its sum-rank weight is given by
  \begin{equation*}
	\mathrm{srk}(\mathbf{c}) = \sum_{i=1}^\ell rk(C_i).
    \end{equation*}
    \end{defn}
  
  \begin{defn}[\cite{b7}]
  Let $d_{srk}$ be the map $$\begin{array}{ccccc}
  	d_{srk}  &: & \mathbb{M} \times \mathbb{M }& \longrightarrow & \mathbb{N} \\
  	& & (\mathbf{c}, \mathbf{d}) & \longrightarrow & \mathrm{srk}(\mathbf{c}-\mathbf{d}),
  \end{array}$$ Then $(\mathbb{M}, d)$ is a metric space.
  \end{defn}
  \begin{defn}[Sum-rank distance{\cite{b7}}] 
   Let $\mathcal{C} \subseteq \mathbb{M}$ be a sum-rank metric code. The minimum sum-rank distance of $\mathcal{C}$ is defined as\[d_{\mathrm{srk}}(\mathcal{C})\;=\;\min_{\substack{\mathbf{x}, \mathbf{y} \in \mathcal{C} \\ \mathbf{x} \neq \mathbf{y}}} \mathrm{srk}(\mathbf{x} - \mathbf{y}).\]\end{defn}
   \label{rem1}
   The sum-rank metric is a natural generalization of both the Hamming metric and the rank-metric. If $\ell = 1$, then $\mathbb{M} = \mathbb{F}_q^{m \times n}$ and the sum-rank metric coincides with the rank-metric on $\mathbb{M}$. In this case, a sum-rank metric code $\mathcal{C}$ is a rank-metric code. On the other hand, if $m_1 = \cdots = m_\ell = 1$, then $\mathbb{M} = \mathbb{F}_q^{n}$, and the sum-rank metric coincides with the Hamming metric.

   Throughout, we consider FCC for sum-rank metric codes over $(\mathbb{F}_q^{m \times m})^k$. The following example illustrates the sum-rank distance between two codewords of a sum-rank metric code.
   \begin{example}
   \label{example1}
	For $q = 2, m = 2,$ and   $k = 3$, let $\mathcal{C} \subseteq (\mathbb{F}_2^{2 \times 2})^3$ be a sum-rank metric code. Let $\mathbf{c}_1,\mathbf{c}_2 \in \mathcal{C}$ be two codewords given by  \[\mathbf{c}_1 =\left[\begin{pmatrix}
		0 & 0\\
		0 & 0
	\end{pmatrix},
	\begin{pmatrix}
		1 & 1\\
		0 & 1
	\end{pmatrix},
	\begin{pmatrix}
		1 & 0\\
		0 & 1
	\end{pmatrix}\right]\] 
	\[\mathbf{c}_2 =\left[
	\begin{pmatrix}
		1 & 0\\
		0 & 1
	\end{pmatrix},
	\begin{pmatrix}
		0 & 1\\
		1 & 0
	\end{pmatrix},
	\begin{pmatrix}
		1 & 0\\
		1 & 0
	\end{pmatrix}
	\right]\]\[
	\mathbf{c}_1-\mathbf{c}_2
	=
	\left[
	\begin{pmatrix}
		1 & 0\\
		0 & 1
	\end{pmatrix},
	\;
	\begin{pmatrix}
		1 & 0\\
		1 & 1
	\end{pmatrix},
	\;
	\begin{pmatrix}
		0 & 0\\
		1 & 1
	\end{pmatrix}
	\right]
	\]
	The sum-rank distance between $\mathbf{c}_1$ and $\mathbf{c}_2$ is
	\[
	d_{\mathrm{srk}}(\mathbf{c}_1,\mathbf{c}_2)
	=
	\sum_{i=1}^{3}
	\mathrm{rk}\!\left(
	\mathbf{c}_1^{(i)} - \mathbf{c}_2^{(i)}
	\right) = 2 + 2 + 1 = 5.
	\]
    \end{example}
\begin{defn}[\cite{b5}]
\label{MRD}
Let $q$ be a prime power and $m \ge 2$. An $[m,1,m]$ constant maximum rank-metric code over $\mathbb{F}_{q^m}$ is a linear code $\mathcal{C} \subseteq \mathbb{F}_{q^m}^{\,m}$ of dimension $1$ over $\mathbb{F}_{q^m}$ such that $d_{\mathrm{rk}}(\mathbf{X},\mathbf{Y}) = m$ for all distinct  $\mathbf{X},\mathbf{Y} \in \mathcal{C},$ where $\mathbf{X}, \mathbf{Y}$ are matrices of size $m \times m.$
\end{defn}
\begin{example}
	\label{example2}
	Let $q = 2$ and $\mathbb{F}_{4} = \{0, 1, \alpha, \alpha^2\}$, where $\alpha$ is a root of the irreducible polynomial $f(\lambda) = \lambda^2 + \lambda + 1$. Consider the one-dimensional $\mathbb{F}_{2^m}$-linear code of length $m = 2$  generated by generator matrix $\mathbf{G} = [1 \; \alpha]$ with information vectors as $\mathbf{u} = (u), u \in \mathbb{F}_4$. Code vectors are given by $\mathbf{v} = \mathbf{uG}$.\[\mathcal{C}= \{\mathbf{v}\mid\mathbf{v}=(u,u\alpha), u \in \mathbb{F}_4\}.\] Define a vector representation $\theta^{-1} : \mathbb{F}_4 \to \mathbb{F}_2^2$ by \[1 \leftrightarrow (1,0)^T, \qquad \alpha \leftrightarrow (0,1)^T,\] which implies $\alpha^2 = \alpha + 1 \leftrightarrow (1,1)^T.$ Applying $\theta^{-1}$ column-wise to each codeword yields the associated $2 \times 2 $ matrix code \[
	\mathcal{M} =
	\left\{
	\begin{pmatrix}
		0 & 0 \\
		0 & 0
	\end{pmatrix},
	\begin{pmatrix}
		1 & 0 \\
		0 & 1
	\end{pmatrix},
	\begin{pmatrix}
		0 & 1 \\
		1 & 1
	\end{pmatrix},
	\begin{pmatrix}
		1 & 1 \\
		1 & 0
	\end{pmatrix}
	\right\}.
	\]
	This matrix code is the $\mathbb{F}_2$-representation of the $[2,1,2]$ $\mathbb{F}_{4}$-linear code $\mathcal{C}$. The rank distance between any two codewords is 2.
\end{example}
In \cite{b5}, an explicit way of construction of such $[m,1,m]$ 1-dimensional rank metric codes has been presented. . 
\subsection{Function-Correcting Codes \cite{b1}}
In \cite{b1}, the concept of function-correcting codes is introduced, along with a coding framework that formalizes the basic definitions of function-correcting codes and irregular distance codes over the binary field $\mathbb{F}_2$. In \cite{b1}, $\mathbb{Z}_2$ is used for $\mathbb{F}_2.$
\begin{defn}[\cite{b1}]
An encoding function $\mathrm{Enc} : \mathbb{Z}_2^k \rightarrow \mathbb{Z}_2^{k+r}$, with $\mathrm{Enc}(\mathbf{u}) = (\mathbf{u}, \mathbf{p}(\mathbf{u})), \quad \mathbf{u} \in \mathbb{Z}_2^k,\quad \mathbf{p}(\mathbf{u}) \in \mathbb{Z}_2^{r}$ defines a function-correcting code for the function $f : \mathbb{Z}_2^k \rightarrow \mathrm{Im}(f)$ if, for all $\mathbf{u}_1, \mathbf{u}_2 \in \mathbb{Z}_2^k$ such that $f(\mathbf{u}_1) \neq f(\mathbf{u}_2)$, it holds that $d\big(\mathrm{Enc}(\mathbf{u}_1), \mathrm{Enc}(\mathbf{u}_2)\big) \geq 2t + 1.$
\end{defn}
The check vector $\mathbf{p}(\mathbf{u})$ is the redundancy vector, and $r$ denotes the redundancy. The lower and upper bounds on optimal redundancy are established in \cite{b1} using connection between FCCs and irregular distance codes for the Hamming metric.  The distance requirement matrix and irregular distance codes are defined as follows.
\begin{defn}[Distance Requirement Matrix ~\cite{b1}]
Let $\mathbf{u}_1, \ldots, \mathbf{u}_M \in \mathbb{Z}_2^k$. The \emph{distance requirement matrix}
$\mathbf{D}_f(t, \mathbf{u}_1, \ldots, \mathbf{u}_M)$ of a function $f$ is defined as the $M \times M$
matrix with entries
\[
	[\mathbf{D}_f(t,\! \mathbf{u}_1, \dots, \mathbf{u}_M)]_{ij}\!=\!
	\begin{cases}
		\begin{aligned}
			&[2t\!+\!1\!-\!d(\mathbf{u}_i,\mathbf{u}_j)]^+\!, \text{if } f(\!\mathbf{u}_i\!) \!\neq\! f(\!\mathbf{u}_j\!), \\
			&0, \quad \text{otherwise}.
		\end{aligned}
	\end{cases}
	\]
Let $\mathcal{P} = \{\mathbf{p}_1, \mathbf{p}_2, \ldots, \mathbf{p}_M\} \subseteq \mathbb{Z}_2^{r}$ be a code of length $r$ and cardinality $M$.
\end{defn}
\begin{defn}[$\mathbf{D}$-Code ~\cite{b1}]
Let $\mathbf{D} \in \mathbb{N}_0^{M \times M}$. A code $\mathcal{P} = \{\mathbf{p}_1, \mathbf{p}_2, \ldots, \mathbf{p}_M\}$ is called a $\mathbf{D}$-code if there exists an ordering of the codewords of $\mathcal{P}$ such that $d(\mathbf{p}_i, \mathbf{p}_j) \geq [\mathbf{D}]_{ij}, \quad \forall\, i,j \in [M].$ Furthermore, we define $N(\mathbf{D})$ as the smallest integer $r$ such that there exists a $\mathbf{D}$-code of length $r$. If $[\mathbf{D}]_{ij} = D$ for all $i \neq j$, we write $N(M,D)$.
\end{defn}
A $\mathbf{D}$-code requires that the distance between each pair of codewords is at least the value specified by the corresponding entry in the distance requirement matrix. Motivated by this framework for the Hamming metric, we extend this framework  to the sum-rank metric in the next section and obtain our results.
\section{General  Results on the Optimal Redundancy}
\label{General  Results on the Optimal Redundancy}
In this section, we first establish the results on the optimal redundancy of Function-Correcting Sum-Rank Codes (FCSRCs). We then propose a Plotkin-like bound for irregular-sum-rank distance codes. 
\subsection{A Connection between FCSRCs and Irregular-Sum-Rank-Distance Codes}
\label{A Connection between FCSRCs and Irregular-Sum-Rank-Distance Codes}
Let $ \mathbf{u} \in (\mathbb{F}_q^{m \times m})^k.$ be the message and let $ f : (\mathbb{F}_q^{m \times m})^k \to Im(f) = \{ f(\mathbf{u}) \mid \mathbf{u} \in (\mathbb{F}_q^{m \times m})^k \} $ be a function computed on $ \mathbf{u}$ with expressiveness $ E = |Im(f)| $. The message is encoded via the encoding function $\mathrm{Enc} : (\mathbb{F}_q^{m \times m})^k \to (\mathbb{F}_q^{m \times m})^{k+r}$,  where $\mathrm{Enc}(\mathbf{u}) = (\mathbf{u}, \mathbf{p}(\mathbf{u})),$ with $ \mathbf{p}(\mathbf{u}) \in (\mathbb{F}_q^{m \times m})^r$ being the redundancy vector and $r$ the redundancy. The resulting codeword $ \mathrm{Enc}(\mathbf{u})$ is transmitted over a channel, resulting in $ \mathbf{y} \in (\mathbb{F}_q^{m \times m})^{k+r} $ with $ d_{srk}(\mathrm{Enc}(\mathbf{u}), \mathbf{y}) \leq t $. The formal definition of FCSRC is given below.
\begin{defn}[Function-Correcting Sum-Rank Codes]
\label{Function-Correcting Sum-Rank Codes}
    An encoding function \( \mathrm{Enc} : (\mathbb{F}_q^{m \times m})^k \to (\mathbb{F}_q^{m \times m})^{k+r} \),  
	$ \mathrm{Enc}(\mathbf{u}) = (\mathbf{u}, \mathbf{p}(\mathbf{u})) $ defines a function-correcting sum-rank code (FCSRC) for the function $ f : (\mathbb{F}_q^{m \times m})^k \to Im(f) $  if for all $ \mathbf{u}_1, \mathbf{u}_2 \in (\mathbb{F}_q^{m \times m})^k $ with $ f(\mathbf{u}_1) \neq f(\mathbf{u}_2)$, we have $d_{srk}(\mathrm{Enc}(\mathbf{u}_1), \mathrm{Enc}(\mathbf{u}_2)) \geq 2t + 1$. 
\end{defn}
We now introduce the concept of optimal redundancy of an FCSRC associated with a function $f$, which is the key performance measure in FCCs for any metric.

\begin{defn}[Optimal Redundancy]
\label{Optimal Redundancy}
The optimal redundancy \( r_{srk}^f(m, k, t) \) is defined as the smallest integer \( r \) for which there exists an FCSRC with an encoding function \( \mathrm{Enc} : (\mathbb{F}_q^{m \times m})^k \to (\mathbb{F}_q^{m \times m})^{k+r} \) that enables recovery of \( f(\mathbf{u}) \) under $t$ sum-rank errors.
\end{defn}
The definition of sum-rank distance requirement  matrix (DRM) and $\mathbf{D}$-irregular-sum-rank-distance code associated with a function \(f\)  follows.

\begin{defn}[DRM] 
\label{DRM}
Let $\mathbf{u}_1, \dots, \mathbf{u}_M \in (\mathbb{F}_q^{m \times m})^k$. The DRM 	$\mathbf{D}_{srk}^f(t, \mathbf{u}_1, \dots, \mathbf{u}_M)$ of a function $f$ is defined as the $M \times M$ matrix with entries
	\[
	[\!\mathbf{D}_{srk}^f\!(t,\! \mathbf{u}_1, \dots, \mathbf{u}_M)\!]_{ij}\!=\!
	\begin{cases}
		\begin{aligned}
			&\![\!2t\! +\!1\!-\!d_{srk}\!(\!\mathbf{u}_i,\! \mathbf{u}_j\!)\!]^+\!, \text{if }\! f(\!\mathbf{u}_i\!)\!\neq\!f(\!\mathbf{u}_j\!), \\
			&0, \quad \text{otherwise}.
		\end{aligned}
	\end{cases}
	\]
\end{defn}
\begin{defn}[$\mathbf{D}_{srk}$-code]
\label{D-code}
Let $\mathbf{D} \in \mathbb{N}_0^{M \times M}$ and $\mathcal{P} = \{ \mathbf{p}_1, \mathbf{p}_2, \ldots, \mathbf{p}_M \} \subseteq (\mathbb{F}_q^{m \times m})^k$ be a code of length $r$ and cardinality $M$. Then, $\mathcal{P} = \{ \mathbf{p}_1, \mathbf{p}_2, \ldots, \mathbf{p}_M \}$ is a $\mathbf{D}$-irregular-sum-rank-distance code ($\mathbf{D}_{srk}$-code), if there exists an ordering of the codewords of $\mathcal{P}$ such that $d_{srk}(\mathbf{p}_i, \mathbf{p}_j) \geq [\mathbf{D}]_{ij}$ for all $i, j \in [M]$.
\end{defn}
The smallest integer $r$ such that there exists a $\mathbf{D}_{srk}$-code of length $r$ is denoted by $N_{srk}(\mathbf{D})$. With  these definitions, a link between FCSRCs and irregular-sum-rank-distance codes is established in the following theorem the proof of which is along the same lines as that was given in \cite{b1} for the Hamming metric. 

\begin{thm}
\label{theorem1}
For any function $f : (\mathbb{F}_q^{m \times m})^k \to \mathrm{Im}(f)$, the optimal redundancy satisfies
\[r_{srk}^f(m,k,t)\;=\;N_{srk}(\mathbf{D}_{srk}^f(t; \mathbf{u}_1, \ldots, \mathbf{u}_{(q^{m \times m})^{k}})),\]where $\{\mathbf{u}_1, \ldots, \mathbf{u}_{(q^{m \times m})^{k}}\} \in (\mathbb{F}_q^{m \times m})^k$ denotes the set of all $k$ length vectors of matrices over $\mathbb{F}_q^{m \times m}$.
\end{thm}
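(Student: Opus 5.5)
The plan is to prove the equality by establishing two inequalities, mirroring the Hamming-metric argument of \cite{b1}. The key structural fact is that the sum-rank distance is additive over blocks. For $\mathbf{x}=(\mathbf{u},\mathbf{p})$ and $\mathbf{y}=(\mathbf{v},\mathbf{p}')$ in $(\mathbb{F}_q^{m\times m})^{k+r}$ we have
\[
d_{srk}(\mathbf{x},\mathbf{y})=d_{srk}(\mathbf{u},\mathbf{v})+d_{srk}(\mathbf{p},\mathbf{p}'),
\]
since $\mathrm{srk}$ is the sum of the ranks of the individual matrix coordinates. I will state this once and then use it in both directions. Throughout, I write $\mathbf{D}=\mathbf{D}_{srk}^f(t;\mathbf{u}_1,\ldots,\mathbf{u}_{q^{m^2k}})$, with the messages listed in the fixed order of the statement.

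\textbf{Step 1: $r_{srk}^f(m,k,t)\ge N_{srk}(\mathbf{D})$.} Take an optimal FCSRC $\mathrm{Enc}(\mathbf{u})=(\mathbf{u},\mathbf{p}(\mathbf{u}))$ with redundancy $r=r_{srk}^f(m,k,t)$, and set $\mathbf{p}_i=\mathbf{p}(\mathbf{u}_i)$. Fix $i,j$ with $f(\mathbf{u}_i)\ne f(\mathbf{u}_j)$. Definition~\ref{Function-Correcting Sum-Rank Codes} together with additivity gives
\[
d_{srk}(\mathbf{p}_i,\mathbf{p}_j)\ge 2t+1-d_{srk}(\mathbf{u}_i,\mathbf{u}_j).
\]
Since distances are nonnegative, this yields $d_{srk}(\mathbf{p}_i,\mathbf{p}_j)\ge[2t+1-d_{srk}(\mathbf{u}_i,\mathbf{u}_j)]^+=[\mathbf{D}]_{ij}$. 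When $f(\mathbf{u}_i)=f(\mathbf{u}_j)$ the requirement $[\mathbf{D}]_{ij}=0$ is trivially met.

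So the tuple $(\mathbf{p}_1,\ldots,\mathbf{p}_{q^{m^2k}})$ of length $r$ satisfies every entry of $\mathbf{D}$. One subtlety is that the $\mathbf{p}_i$ need not be distinct, while Definition~\ref{D-code} speaks of a ``code of cardinality $M$''. I will read $\mathcal{P}$ as an indexed family (a multiset), as in \cite{b1}. This is legitimate because any two messages with $[\mathbf{D}]_{ij}>0$ automatically receive distinct redundancy vectors. Under this reading, $N_{srk}(\mathbf{D})\le r$.

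\textbf{Step 2: $r_{srk}^f(m,k,t)\le N_{srk}(\mathbf{D})$.} Let $\mathcal{P}=\{\mathbf{p}_1,\ldots,\mathbf{p}_M\}$ be a $\mathbf{D}_{srk}$-code of length $N=N_{srk}(\mathbf{D})$, ordered so that $d_{srk}(\mathbf{p}_i,\mathbf{p}_j)\ge[\mathbf{D}]_{ij}$. Define $\mathrm{Enc}(\mathbf{u}_i)=(\mathbf{u}_i,\mathbf{p}_i)$. For $f(\mathbf{u}_i)\ne f(\mathbf{u}_j)$, additivity gives
\[
d_{srk}(\mathrm{Enc}(\mathbf{u}_i),\mathrm{Enc}(\mathbf{u}_j))\ge d_{srk}(\mathbf{u}_i,\mathbf{u}_j)+[2t+1-d_{srk}(\mathbf{u}_i,\mathbf{u}_j)]^+\ge 2t+1.
\]
Hence $\mathrm{Enc}$ is an FCSRC with redundancy $N$, and $r_{srk}^f(m,k,t)\le N$.

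\textbf{Main obstacle.} Neither direction involves real difficulty; the content lies in bookkeeping. First, I need to justify the additivity identity, which holds because each coordinate is its own matrix block. Second, I need to handle the definitional details: the lengths, the ordering of codewords matching the fixed ordering of the messages, and the possible repetition of redundancy vectors in Step 1. I will make the multiset convention explicit so that the two quantities are compared over the same class of objects.
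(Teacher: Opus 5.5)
Your proposal is correct and follows the route the paper intends. The paper gives no separate proof and instead defers to the two-inequality argument of \cite{b1}: the redundancy vectors of an optimal code form a $\mathbf{D}_{srk}$-code, and conversely appending a $\mathbf{D}_{srk}$-code to the messages yields an FCSRC, with the block-wise additivity $d_{srk}((\mathbf{u},\mathbf{p}),(\mathbf{v},\mathbf{p}'))=d_{srk}(\mathbf{u},\mathbf{v})+d_{srk}(\mathbf{p},\mathbf{p}')$ playing the role of Hamming additivity exactly as you use it. Your reading of $\mathcal{P}$ in Definition~\ref{D-code} as an indexed family with repetitions allowed is a necessary clarification, since requiring $q^{m^2k}$ distinct vectors of length $r$ would force $r\ge k$ and would contradict, for example, Lemma~\ref{lemma1}.
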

\subsection{Simplified Bounds on Optimal Redundancy}
\label{Simplified Bounds on Optimal Redundancy}
In this subsection, we first compute simplified lower bounds on the optimal redundancy of FCSRCs. Using an arbitrary subset of information vectors $\{\mathbf{u}_1, \dots, \mathbf{u}_{M}\} \in (\mathbb{F}_q^{m \times m})^k$, we can obtain lower bounds as follows:
\begin{cor}
\label{cor1}
Let $\mathbf{u}_1, \ldots, \mathbf{u}_M \in (\mathbb{F}_q^{m \times m})^k$ be an arbitrary subset of distinct vectors. Then the optimal redundancy of a FCSRC satisfies
\[
r_{srk}^f(m, k, t)\;\ge\; N_{srk}\!(\mathbf{D}_{srk}^f(t; \mathbf{u}_1, \ldots, \mathbf{u}_M)).
\]
For any function $f$ with $|\mathrm{Im}(f)| \ge 2$, the redundancy satisfies
\[
r_{srk}^f(m,k,t) \;\ge\; N_{srk}(2, 2t)
\;=\;
\left\lceil \frac{2t}{m} \right\rceil .
\]
\end{cor}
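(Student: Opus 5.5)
The plan has two parts: a monotonicity argument for the first inequality, then a specialisation to two well-chosen messages for the second.

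For the first inequality, I will use Theorem~\ref{theorem1}, which identifies $r_{srk}^f(m,k,t)$ with $N_{srk}$ of the full distance requirement matrix over all $q^{m^2k}$ messages. The matrix $\mathbf{D}_{srk}^f(t;\mathbf{u}_1,\ldots,\mathbf{u}_M)$ is a principal submatrix of the full one, since its entries depend only on the pair $(\mathbf{u}_i,\mathbf{u}_j)$. Take any $\mathbf{D}_{srk}$-code for the full matrix with length $r_{srk}^f(m,k,t)$. Keeping only the redundancy vectors $\mathbf{p}(\mathbf{u}_1),\ldots,\mathbf{p}(\mathbf{u}_M)$, in the induced ordering, gives a code of the same length that meets every requirement of the submatrix. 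Hence $N_{srk}$ of the submatrix is at most $r_{srk}^f(m,k,t)$.

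For the second inequality, I use $|\mathrm{Im}(f)|\ge 2$ to choose two messages with different function values and sum-rank distance $1$. First fix any $\mathbf{u}_1$ and $\mathbf{u}$ with $f(\mathbf{u}_1)\neq f(\mathbf{u})$. Walk from $\mathbf{u}_1$ to $\mathbf{u}$ by changing one matrix entry at a time; each such step changes the sum-rank by at most $1$, so consecutive messages are at distance $1$ (or equal). Since the function value differs at the endpoints, it changes along some step, which yields $\mathbf{u}_i,\mathbf{u}_j$ with $f(\mathbf{u}_i)\neq f(\mathbf{u}_j)$ and $d_{srk}(\mathbf{u}_i,\mathbf{u}_j)=1$. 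Their $2\times 2$ distance requirement matrix has off-diagonal entry $[2t+1-1]^+=2t$. By the first part, $r_{srk}^f(m,k,t)\ge N_{srk}(2,2t)$.

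It remains to show $N_{srk}(2,2t)=\lceil 2t/m\rceil$, and this is the only real computation.
\begin{itemize}
\item \emph{Lower bound.} A vector in $(\mathbb{F}_q^{m\times m})^r$ has sum-rank at most $rm$, because each block has rank at most $m$. Two codewords at distance $\ge 2t$ therefore force $rm\ge 2t$, that is, $r\ge\lceil 2t/m\rceil$.
\item \emph{Achievability.} With $r=\lceil 2t/m\rceil$, take $\mathbf{p}_1=(\mathbf{0}_m,\ldots,\mathbf{0}_m)$ and $\mathbf{p}_2=(\mathbf{I}_m,\ldots,\mathbf{I}_m)$. Their sum-rank distance is $rm\ge 2t$.
\end{itemize}
The main obstacle is simply confirming that the per-block rank cap $m$ is tight, so the bound is exactly the ceiling; this is routine.
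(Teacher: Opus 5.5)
Your proposal is correct and follows essentially the same route as the paper. The paper gives no separate proof and inherits the argument of \cite{b1}: use Theorem~\ref{theorem1} and restrict a $\mathbf{D}_{srk}$-code to a principal submatrix, then pick two messages at sum-rank distance $1$ with different function values. Your single-entry walk and your computation $N_{srk}(2,2t)=\lceil 2t/m\rceil$ (per-block rank cap for the lower bound, repeated $\mathbf{0}_m$ versus $\mathbf{I}_m$ for achievability) fill in details the paper states without proof.
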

We derive bounds on the optimal redundancy of FCSRCs with respect to funciton distance matrix which depends on sum-rank distance between two distinct function values defined as follows. 
\begin{defn}[Function Distance]
\label{Function Distance}
The minimum sum-rank distance between any pair of information vectors that evaluate to $f_1$ and $f_2$, is 
	\[
	d_{srk}^f(f_1, f_2) \triangleq \! \min_{\mathbf{u}_1, \mathbf{u}_2 \in (\mathbb{F}_q^{m \times m})^k} \! d_{srk}(\mathbf{u}_1, \mathbf{u}_2) \; \]
    \[\text{s.t.} \; f(\mathbf{u}_1) = f_1, f(\mathbf{u}_2) = f_2.
	\]
\end{defn}
Based on this, the definition of the sum-rank function-distance matrix (FDM) of $f$ follows.
\begin{defn}[Function Distance Matrix] 
\label{FDM}
    The function distance matrix of $f$ with $|Im(f)|$ = E is defined with $E \times E$ matrix $\mathbf{D}_{srk}^f(t, f_1, \ldots, f_E)$ with entries $[ \mathbf{D}_{srk}^f(t, f_1, \ldots, f_E)]_{ij} = [2t + 1 - d_{srk}^f(f_i, f_j)]^+, \; \text{if } i \neq j$ and $[ \mathbf{D}_{srk}^f(t, f_1, \ldots, f_E)]_{ii} = 0$, for $t$ sum-rank error correction.
\end{defn}
\begin{thm}
\label{theorem2}
For any arbitrary function $f : (\mathbb{F}_q^{m \times m})^k \rightarrow \mathrm{Im}(f)$, we have 
	$
	r_{srk}^f(q, k, t) \leq N_{srk}(\mathbf{D}_{srk}^f(t, f_1, \ldots, f_E)) 
	$.
\end{thm}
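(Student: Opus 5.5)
The plan is to build an explicit FCSRC from a $\mathbf{D}_{srk}$-code for the function distance matrix, following the argument of \cite{b1} for the Hamming metric. Let $r = N_{srk}(\mathbf{D}_{srk}^f(t, f_1, \ldots, f_E))$. By the definition of $N_{srk}$ there is a code $\{\mathbf{p}_1, \ldots, \mathbf{p}_E\} \subseteq (\mathbb{F}_q^{m \times m})^r$ with an ordering such that
\[
d_{srk}(\mathbf{p}_i, \mathbf{p}_j) \ge [2t+1-d_{srk}^f(f_i,f_j)]^+ \quad \text{for all } i \neq j .
\]
We define the encoder so that the redundancy depends only on the function value:
\[
\mathrm{Enc}(\mathbf{u}) = (\mathbf{u}, \mathbf{p}_{i}) \quad \text{whenever } f(\mathbf{u}) = f_i .
\]
This is a well-defined map into $(\mathbb{F}_q^{m \times m})^{k+r}$, since each $\mathbf{u}$ has exactly one function value.

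Next we check the FCSRC condition of Definition~\ref{Function-Correcting Sum-Rank Codes}. Take $\mathbf{u}, \mathbf{v}$ with $f(\mathbf{u}) = f_i \neq f_j = f(\mathbf{v})$. The sum-rank weight is a sum of block ranks, so the distance splits additively over the message part and the redundancy part:
\[
d_{srk}(\mathrm{Enc}(\mathbf{u}), \mathrm{Enc}(\mathbf{v})) = d_{srk}(\mathbf{u},\mathbf{v}) + d_{srk}(\mathbf{p}_i,\mathbf{p}_j).
\]
By Definition~\ref{Function Distance}, $d_{srk}(\mathbf{u},\mathbf{v}) \ge d_{srk}^f(f_i,f_j)$. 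Hence
\[
d_{srk}(\mathrm{Enc}(\mathbf{u}), \mathrm{Enc}(\mathbf{v})) \ge d_{srk}^f(f_i,f_j) + [2t+1-d_{srk}^f(f_i,f_j)]^+ \ge 2t+1 .
\]
To see the last inequality, consider two cases. If the positive part is active, the sum equals exactly $2t+1$. Otherwise $d_{srk}^f(f_i,f_j) \ge 2t+1$ already.

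So $\mathrm{Enc}$ is an FCSRC with redundancy $r$. Since the optimal redundancy in Definition~\ref{Optimal Redundancy} is the minimum over all FCSRCs, we get $r_{srk}^f \le r$. Equivalently, this code is a $\mathbf{D}_{srk}^f(t;\mathbf{u}_1,\ldots)$-code, and Theorem~\ref{theorem1} gives the same conclusion.

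The only point requiring care is the additive splitting of the sum-rank distance across the concatenation. It follows directly from the definition of $\mathrm{srk}$ as a sum of ranks of the component matrices, with the $k$ message blocks and the $r$ parity blocks forming disjoint sets of components. There is no real obstacle beyond this. The statement's notation $r_{srk}^f(q,k,t)$ is read as $r_{srk}^f(m,k,t)$.
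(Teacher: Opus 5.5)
Your proof is correct and follows the argument the paper intends. The paper does not write out its proof; it defers to the Hamming-metric argument of \cite{b1}, which takes a $\mathbf{D}$-code $\{\mathbf{p}_1,\ldots,\mathbf{p}_E\}$ for the function distance matrix, sets $\mathrm{Enc}(\mathbf{u})=(\mathbf{u},\mathbf{p}_{i})$ when $f(\mathbf{u})=f_i$, and combines additivity of the distance over blocks with $d_{srk}(\mathbf{u},\mathbf{v})\ge d_{srk}^f(f(\mathbf{u}),f(\mathbf{v}))$, exactly as you do (the only sum-rank-specific step being the block-wise additivity of $\mathrm{srk}$, which you handle correctly).
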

The proof of this theorem is also along the same lines as the corresponding theorem for the Hamming metric given in \cite{b1}. 
\subsection{Plotkin-like Bound for $N_{srk}(D)$}
\label{Plotkin Bound}

 In this subsection, we propose a Plotkin-like bound for irregular-sum-rank-distance codes. 
\begin{thm}
\label{theorem3}
For any distance matrix $\mathbf{D} \in \mathbb{N}_0^{M \times M}$ and for irregular-sum-rank-distance codes over
$\mathbb{F}_q$, we have
\[N_{srk}(\mathbf{D})\;\ge\;\frac{2q^m}{m(M^{2}(q^m-1)-a(q^m-a))}\sum_{1 \le i < j \le M} [\mathbf{D}]_{i,j},\] where $a = M \mod {q^m}$.
\end{thm}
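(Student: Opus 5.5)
Since $N_{srk}(\mathbf{D})$ is a minimum over codes, it suffices to take an arbitrary $\mathbf{D}_{srk}$-code $\mathcal{P}=\{\mathbf{p}_1,\ldots,\mathbf{p}_M\}\subseteq(\mathbb{F}_q^{m\times m})^r$ of length $r$ and show that $r$ is at least the right-hand side. The plan is the standard Plotkin double-counting argument applied to the total pairwise distance
\[S=\sum_{1\le i<j\le M} d_{srk}(\mathbf{p}_i,\mathbf{p}_j).\]
The $\mathbf{D}_{srk}$-code property gives $S\ge\sum_{i<j}[\mathbf{D}]_{ij}$. It then remains to prove the upper bound
\[S\le \frac{rm\,(M^2(q^m-1)-a(q^m-a))}{2q^m}.\]

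The key step is to reduce the rank distance to a Hamming-type distance on a suitable alphabet of size $q^m$. For each block position $\ell\in[r]$, write the $m\times m$ matrix $\mathbf{p}_i^{(\ell)}$ as its $m$ rows $\mathbf{p}_i^{(\ell,s)}\in\mathbb{F}_q^m$, $s\in[m]$, each viewed as a symbol from an alphabet of size $q^m$. For any matrix $A$, $\mathrm{rk}(A)$ is at most the number of nonzero rows of $A$. Applied to $A=\mathbf{p}_i^{(\ell)}-\mathbf{p}_j^{(\ell)}$, this gives
\[d_{rk}(\mathbf{p}_i^{(\ell)},\mathbf{p}_j^{(\ell)})\le \sum_{s=1}^m \mathbbm{1}\bigl[\mathbf{p}_i^{(\ell,s)}\neq\mathbf{p}_j^{(\ell,s)}\bigr].\]
Summing over $\ell$, we get $d_{srk}(\mathbf{p}_i,\mathbf{p}_j)\le d_H(\tilde{\mathbf{p}}_i,\tilde{\mathbf{p}}_j)$, where $\tilde{\mathbf{p}}_i$ is a word of length $rm$ over an alphabet of size $Q=q^m$. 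So $S$ is bounded by the total pairwise Hamming distance of $M$ words of length $rm$ over a $Q$-ary alphabet.

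Next comes the column-wise count. Fix one of the $rm$ coordinates and let $n_1,\ldots,n_Q$ be the multiplicities of the symbols appearing there, with $\sum_x n_x=M$. The number of disagreeing pairs in that coordinate is
\[\tfrac12\Bigl(M^2-\sum_x n_x^2\Bigr).\]
To maximize this we minimize $\sum_x n_x^2$ subject to integer $n_x$, which happens for the most balanced distribution. Write $M=bQ+a$ with $0\le a<Q$; then $a$ symbols occur $b+1$ times and $Q-a$ occur $b$ times. A direct computation, substituting $b=(M-a)/Q$, gives
\[\sum_x n_x^2 = \frac{M^2+a(Q-a)}{Q},\]
and hence the maximum number of disagreeing pairs per coordinate is
\[\frac{M^2(Q-1)-a(Q-a)}{2Q}.\]
Multiplying by $rm$ coordinates and setting $Q=q^m$ yields the claimed upper bound on $S$. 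Combining this with the lower bound on $S$ and solving for $r$ gives the theorem; since $r$ was the length of an arbitrary $\mathbf{D}_{srk}$-code, it applies in particular to $r=N_{srk}(\mathbf{D})$.

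The main obstacle is the reduction step: it needs a way to split the rank distance additively into $m$ "symbols" per block, each over an alphabet of size exactly $q^m$, in order to match the denominator. The row decomposition with the bound rank $\le$ number of nonzero rows does this, and I would use it first. A fallback is to use the $\mathbb{F}_{q^m}$-vector (column) representation as in Example~\ref{example2}, but then the rank distance is at most $m$ times an indicator over an alphabet of size $q^{m^2}$, which gives the wrong constant. So the row-based count is the intended route, and it should be checked that it exactly reproduces the factor $m$ and $q^m$ in the statement. The remaining work is the routine integer-optimization identity above.
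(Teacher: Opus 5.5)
Your proof is correct, and it reaches the paper's bound by a different and more complete route for the key per-block estimate \eqref{eqn2}. The paper handles block $\ell$ by asserting that $\sum_{i<j}\mathrm{rk}(\mathbf{p}_i^{(\ell)}-\mathbf{p}_j^{(\ell)})$ is largest when the $M$ entries are spread as evenly as possible over the $q^m$ matrices of a constant-rank-distance code (Definition~\ref{MRD}). It then counts the disagreeing pairs of that configuration. It never argues why no other configuration does better. The only generic fact it uses, that a pair of distinct matrices contributes at most $m$, would permit an alphabet of all $q^{m^2}$ matrices and gives a weaker bound when $m\ge 2$.

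Your observation that the rank is at most the number of nonzero rows supplies exactly the missing inequality. It turns each block into $m$ Hamming coordinates over an alphabet of size $q^m$. The classical balanced-multiplicity count then bounds every configuration, with exactly the constants of the statement. In return, the paper's MRD configuration shows your per-block bound is attained with equality, so it cannot be improved block by block.

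One correction to your closing remark. The column representation of Example~\ref{example2} works just as well: the rank is also at most the number of nonzero columns, and each column is a single symbol of $\mathbb{F}_{q^m}$. This column map is precisely the $\Phi$ the paper itself uses in the proof of Theorem~\ref{theorem4}. The estimate with alphabet size $q^{m^2}$ that you describe comes from the naive whole-matrix indicator, not from the column representation.
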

\begin{IEEEproof}
	Let $N_{srk}(\mathbf{D}) = r$ and $\left\{ \mathbf{p}_i \right\}_{i=1}^{M}$ be codewords of a $\mathbf{D}_{srk}$-code of length $r$. Since $\left\{ \mathbf{p}_i \right\}_{i=1}^{M}$ form a $\mathbf{D}_{srk}$-code, by definition we have $[\mathbf{D}]_{ij} \le d_{srk}(\mathbf{p}_i, \mathbf{p}_j)  \quad \forall i, j$. Therefore, \begin{equation}
		\sum_{i,j: i<j} [\mathbf{D}]_{ij} \le \sum_{i,j: i<j} d_{srk}(\mathbf{p}_i, \mathbf{p}_j). 
		\label{eqn1}
	\end{equation}
    We have  \[\sum_{1 \le i < j \le M} d_{\mathrm{srk}}(\mathbf{p}_i, \mathbf{p}_j)\;=\;\sum_{\ell=1}^{r} \sum_{1 \le i < j \le M} \mathrm{rk}\!\left(\mathbf{p}_i^{(\ell)} - \mathbf{p}_j^{(\ell)}\right).\]
    Consider the $\ell$-th coordinate of all codewords. The contribution of this coordinate to the total pairwise sum-rank distance is maximized when the matrices at this coordinate are chosen to have maximum possible pairwise rank distance.\\
    \indent From Definition ~\ref{MRD}, among all the subsets of $\mathbb{F}_q^{m \times m}$, the rank-metric code $\mathcal{M}$ with |$\mathcal{M}$| = $q^m$ is chosen such that every pair of distinct matrices differs by full rank $m$.\\
    \indent Let $a = M \bmod q^m$. To distribute them evenly over $M$ locations, $a$ matrices appear $\lceil\frac{M}{q^m}\rceil$ times, while the remaining $(q^m-a)$ matrices appear $\lfloor\frac{M}{q^m}\rfloor$ times. \\
    \indent Each of the $a$ matrices that appear $\lceil\frac{M}{q^m}\rceil$ times contributes $\lceil \frac{M}{q^m}\rceil \left(M - \lceil \frac{M}{q^m}\rceil\right)$ ordered pairs. Similarly, each of the $q^m - a$ matrices that appear $\lfloor \frac{M}{q^m} \rfloor$ times contributes $\lfloor \frac{M}{q^m}\rfloor \left(M - \lfloor \frac{M}{q^m}\rfloor\right)$ ordered pairs. Since each such pair contributes the maximum rank distance $m$, summing these two cases and dividing by 2 (since we consider pairs with $i<j$) gives the upper bound.

    \begin{equation}
    \begin{aligned}
    \sum_{1 \le i < j \le M}\!rk(\mathbf{p}_i^{(\ell)} - \mathbf{p}_j^{(\ell)})\le&\frac{m}{2}\Bigg[ a \left\lceil \frac{M}{q^m}\right\rceil\left( M - \left\lceil \frac{M}{q^m} \right\rceil\right) \\&+(q^m-a)\left\lfloor \frac{M}{q^m} \right\rfloor\left( M - \left\lfloor \frac{M}{q^m} \right\rfloor \right)\!\Bigg]
\end{aligned}
\label{eqn2}
\end{equation}
    \begin{equation}
    \begin{aligned}
    \sum_{1 \le i < j \le M} d_{srk}(\mathbf{p}_i,\mathbf{p}_j)\le&\frac{rm}{2}\Bigg[a \left\lceil \frac{M}{q^m} \right\rceil\left( M - \left\lceil \frac{M}{q^m} \right\rceil \right) \\&\quad+(q^m-a)\left\lfloor \frac{M}{q^m} \right\rfloor\left( M - \left\lfloor \frac{M}{q^m} \right\rfloor \right)\!\Bigg]
\end{aligned}
\label{eqn3}
\end{equation}
    From ~\eqref{eqn1} and ~\eqref{eqn3}
    \begin{equation}\sum_{1 \le i < j \le M} [\mathbf{D}]_{i,j}\;\le\;\frac{rm}{2q^m}\bigl( M^{2}(q^m-1) - a(q^m-a) \bigr).
    \label{eqn4}
     \end{equation}
    Thus we have the desired bound as,
    \[N_{srk}(\mathbf{D})\;\ge\;\frac{2q^m}{m(M^{2}(q^m-1)-a(q^m-a))}\sum_{1 \le i < j \le M} [\mathbf{D}]_{i,j},\] where $a = M \mod{q^m}$.
    
\end{IEEEproof}
\begin{rem}
\label{rem2}
	For $m=1$, we have $a = M\quad mod \quad q$. This bound reduces to the generalized Plotkin bound for Hamming metric over $\mathbb{F}_q$:
	\[N(\mathbf{D}) \geq \frac{2q}{M^2 (q-1) - a(q-a) }\sum_{1 \leq i < j \leq M} [\mathbf{D}]_{i,j}\].
\end{rem}
\begin{rem}
\label{rem3}
	For regular-distance codes with minimum sum-rank distance $d_{srk},\sum_{i,j: i<j} [\mathbf{D}]_{ij} \geq \frac{M(M-1)}{2}d_{srk}$. Then, the Plotkin bound in Theorem ~\ref{theorem3} reduces to \[N_{srk}(M,D) \ge \frac{q^{m}(M-1)}{m\,(q^{m}-1)\,M}\, d_{srk}\]. This is same as the Plotkin bound stated in \cite{b8} for sum-rank metric codes as upper bound for the size of the code $M$.
\end{rem}
\begin{rem}
\label{rem4}
    For a rank-metric code $\mathcal{C} \subseteq \mathbb{F}_{q^{m}}^{m}$ with minimum rank distance $d$, if length of code, $r = 1$, it reduces to the Plotkin bound for rank-metric which is stated as \[|\mathcal{C}| \le \frac{q^md}{q^md - (q^{m}-1)m},\, d > \frac{q^{m}-1}{q^{m}}m\].
\end{rem}
\begin{example}
\label{example3}
	Let $q = 2, m = 2,$ and  $M = 15$, Example~\ref{example2} gives $[2,1,2]$ 1-dimensional rank-metric code $\mathcal{M} = \left\{
	\begin{pmatrix}
		0 & 0 \\
		0 & 0
	\end{pmatrix},
	\begin{pmatrix}
		1 & 0 \\
		0 & 1
	\end{pmatrix},
	\begin{pmatrix}
		0 & 1 \\
		1 & 1
	\end{pmatrix},
	\begin{pmatrix}
		1 & 1 \\
		1 & 0
	\end{pmatrix}
	\right\} \in \mathbb{F}_2^{2 \times 2}.$ The rank distance between any two codewords in $\mathcal{M}$ is 2. Taking $a = M \mod q^m = 15 \mod 4 = 3$ of them for $\lceil\frac{M}{q^m}\rceil = \lceil\frac{15}{4}\rceil = 4$ times and $q^m - a = 1$ of them for $\lfloor\frac{M}{q^m}\rfloor = \lfloor\frac{15}{4}\rfloor = 3$ times contributes maximum, that is 168, to the $\sum_{1 \le i < j \le M} rk(\mathbf{p}_i^{(\ell)} - \mathbf{p}_j^{(\ell)})$, total pairwise rank distance for a single column $l$, as each pairwise rank distance gives full rank 2, this is given by ~\eqref{eqn2} .
\end{example}
\subsection{Plotkin Bound for FCSRCs for linear functions}
\label{linear functions}
A function \( f : (\mathbb{F}_q^{m \times m})^k \to (\mathbb{F}_q^{m \times m})^l \) is said to be linear if it satisfies the following condition:
\[
f(\alpha \mathbf{x} + \beta \mathbf{y}) = \alpha f(\mathbf{x}) + \beta f(\mathbf{y}), \quad \forall\, \mathbf{x}, \mathbf{y} \in (\mathbb{F}_q^{m \times m})^k,\ \alpha,\beta \in \mathbb{F}_q .\]It can be expressed as a matrix operation $f(\mathbf{x}) = \mathbf{Fx}$, for some $\mathbf{F} \in (\mathbb{F}_q^{m \times m})^{l \times k}$. The kernel of $f$, or the null space of $\mathbf{F}$, is denoted by $\ker(f)$. We will consider linear functions for this subsection such that $l \le k$ and $\mathbf{F}$ is full rank. The following theorem proposes Plotkin bound for FCSRCs for linear functions.
\begin{thm}
\label{theorem4}
    For a linear function $f : (\mathbb{F}_q^{m \times m})^k \to (\mathbb{F}_q^{m \times m})^l$, the optimal redundancy of an $(f,t)$-FCSRC satisfies 
    \begin{align*}
    r_{\mathrm{srk}}^f(m,k,t)\ge\Bigl(\frac{q^m}{m(q^m-1)}\Bigr)(2t+1)(1-q^{-m^2l})-k \\+\frac{s_{\mathrm{srk}}}{m(1-q^{-m})q^{m^2k}},
    \end{align*}
    where $s_{srk}=\sum_{\mathbf{x}\in \ker(f)} w_{srk}(\mathbf{x})$, i.e., the sum of sum-rank weights of the vectors in $\ker(f)$.
\end{thm}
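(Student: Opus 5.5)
We mimic the Hamming-metric argument for linear functions in \cite{b12}, but apply a Plotkin-type averaging to the whole encoded vectors rather than to the redundancy part alone. Let $\mathrm{Enc}(\mathbf{u})=(\mathbf{u},\mathbf{p}(\mathbf{u}))$ be an optimal $(f,t)$-FCSRC with redundancy $r=r_{srk}^f(m,k,t)$, and put $Q=q^{m^2k}=|(\mathbb{F}_q^{m\times m})^k|$. Consider all $Q$ codewords $\mathbf{c}_\mathbf{u}=\mathrm{Enc}(\mathbf{u})$ of length $k+r$. We double count $S=\sum_{\mathbf{u}\neq\mathbf{v}} d_{srk}(\mathbf{c}_\mathbf{u},\mathbf{c}_\mathbf{v})$ over ordered pairs.

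\emph{Upper bound on $S$.} This is coordinatewise, exactly as in the proof of Theorem~\ref{theorem3}. In each of the $k+r$ matrix coordinates, $\mathrm{rk}(A-B)\le m$ for $A\ne B$. The key step is to bound the number of ordered pairs with distinct entries in that coordinate by $Q^2(1-q^{-m})$, i.e. the ``$q^m$ symbols'' count used in Theorem~\ref{theorem3} with $M=Q$, which is divisible by $q^m$ so that $a=0$. This gives
\[
S\le (k+r)\,m\,Q^2(1-q^{-m}).
\]
I expect this step to be the main obstacle. The true alphabet per coordinate is $\mathbb{F}_q^{m\times m}$ of size $q^{m^2}$, so the per-coordinate maximum must be justified as in Theorem~\ref{theorem3}, via a constant-rank code $\mathcal{M}$ of size $q^m$ (Definition~\ref{MRD}) being the extremal configuration. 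I would simply invoke the reasoning of Theorem~\ref{theorem3} as the paper does, rather than re-derive it.

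\emph{Lower bound on $S$.} Split the pairs according to whether $f(\mathbf{u})\ne f(\mathbf{v})$. Since $f$ is linear and $\mathbf{F}$ has full rank, $f$ is onto $(\mathbb{F}_q^{m\times m})^l$, and each fibre is a coset of $\ker(f)$ of size $Q q^{-m^2l}$.
\begin{itemize}
\item Pairs in different fibres number $Q^2(1-q^{-m^2l})$, and each has distance at least $2t+1$ by the FCSRC property.
\item Pairs in the same fibre have distance at least $d_{srk}(\mathbf{u},\mathbf{v})=w_{srk}(\mathbf{u}-\mathbf{v})$ with $\mathbf{u}-\mathbf{v}\in\ker(f)$, because the systematic part alone contributes this. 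Summing over $\mathbf{u}$, and over $\mathbf{x}=\mathbf{v}-\mathbf{u}\in\ker(f)$, gives at least $Q\,s_{srk}$.
\end{itemize}
Hence
\[
S\ge Q^2(1-q^{-m^2l})(2t+1)+Q\,s_{srk}.
\]

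\emph{Combining.} Comparing the two bounds and dividing by $mQ^2(1-q^{-m})$ yields
\[
k+r\ge \frac{q^m}{m(q^m-1)}(2t+1)(1-q^{-m^2l})+\frac{s_{srk}}{m(1-q^{-m})q^{m^2k}}.
\]
Moving $k$ to the right-hand side gives the claimed bound. The remaining work is routine: check the fibre-size count, which uses the full-rank assumption, and the identity $1/(1-q^{-m})=q^m/(q^m-1)$.
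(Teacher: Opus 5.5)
Your argument is correct, but it organizes the computation differently from the paper. The paper applies the Theorem~\ref{theorem3} estimate only to the parity vectors. It then lower-bounds $\sum_{i,j}[\mathbf{D}]_{ij}$ by showing, via translation invariance, that all DRM columns have equal sums, and evaluates the column of $\mathbf{u}_1=\mathbf{0}$ using $[2t+1-d]^+\ge 2t+1-d$. Finally it needs a separate estimate $\sum_{\mathbf{u}}w_{srk}(\mathbf{u})\le Mkm(1-q^{-m})$, obtained from the map $\Phi$ into $\mathbb{F}_{q^m}^{km}$ and Hamming weights.

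You instead average over the whole code $\{\mathrm{Enc}(\mathbf{u})\}$ of length $k+r$. The FCSRC property is used directly on pairs in different fibres, $s_{srk}$ enters as the systematic distance of same-fibre pairs, and the $-k$ comes from the $k$ systematic coordinates in the upper bound, which replaces the $\Phi$ step. No DRM or column-permutation argument is needed. Moving $\sum_{\mathbf{u},\mathbf{v}}d_{srk}(\mathbf{u},\mathbf{v})$ to the other side shows the two final inequalities are identical. Your version makes it transparent that the theorem is the sum-rank Plotkin bound for the length-$(k+r)$ code, with same-fibre pairs credited their systematic distance, and why it collapses to Remark~\ref{rem5} when $\ker(f)=\{\mathbf{0}\}$. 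The paper's form keeps the exact quantity $\sum_{\mathbf{u}}w_{srk}(\mathbf{u})$ visible (cf.\ $672<768$ in Example~\ref{example4}), which allows a slightly sharper bound. You can get the same sharpening by evaluating the $k$ systematic coordinates exactly rather than bounding them.

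One sentence in your upper bound is false as stated and should be fixed. The number of ordered pairs with distinct entries in a coordinate is \emph{not} at most $Q^2(1-q^{-m})$. In each systematic coordinate every matrix of $\mathbb{F}_q^{m\times m}$ occurs $Q/q^{m^2}$ times, which gives $Q^2(1-q^{-m^2})$ such pairs. What does hold, and what you actually use, is the rank-sum bound $\sum_{\mathbf{u},\mathbf{v}}\mathrm{rk}(\mathbf{c}_{\mathbf{u}}^{(\ell)}-\mathbf{c}_{\mathbf{v}}^{(\ell)})\le mQ^2(1-q^{-m})$. For systematic coordinates the paper's $\Phi$ comparison proves it. For parity coordinates you rely on Theorem~\ref{theorem3}, just as the paper does.

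If you want that step rigorous rather than resting on the ``extremal configuration'' heuristic, argue as follows. Let $A,B$ be independent draws from the coordinate's empirical distribution, and note $q^{-\mathrm{rk}(Z)}=q^{-m}|\ker Z|$. Cauchy--Schwarz then gives $\mathbb{E}[q^{-\mathrm{rk}(A-B)}]=q^{-m}\sum_{x\in\mathbb{F}_q^m}\sum_{y}\Pr[Ax=y]^2\ge q^{-m}\bigl(1+(q^m-1)q^{-m}\bigr)$. Convexity gives $q^{-\rho}\le 1-\frac{\rho}{m}(1-q^{-m})$ for $0\le\rho\le m$. Together these yield $\mathbb{E}[\mathrm{rk}(A-B)]\le m(1-q^{-m})$ for any multiset of matrices, which covers all $k+r$ coordinates at once.
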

\begin{IEEEproof}
    Let $M=q^{km^2}$ be the total number of sum-rank vectors in $(\mathbb{F}_q^{m \times m})^k$ and $\mathbf{D}$ denote the distance requirement matrix of size $M\times M$ defined by
    \[[\mathbf{D}]_{ij}=\begin{cases}\left[2t+1-d_{\mathrm{srk}}(\mathbf{u}_i,\mathbf{u}_j)\right]^+,& \text{if } f(\mathbf{u}_i)\neq f(\mathbf{u}_j),\\0, & \text{otherwise}.\end{cases}\] 
    Let $\mathcal{P} = \{\mathbf{p}_1,\mathbf{p}_2,\mathbf{p}_3,...,\mathbf{p}_{M}\}$ be the set of parity vectors such that $\operatorname{Enc}(\mathbf{u}_i)=(\mathbf{u}_i,\mathbf{p}_i)$ forms an $(f,t)$-FCSRC. By definition, we have $d_{srk}(\mathbf{p}_i,\mathbf{p}_j)\ge [\mathbf{D}]_{ij}$ for all $i,j$. 
    For any $\mathbf{D}_{\mathrm{srk}}$-code of length $r$, Theorem ~\ref{theorem3} gives \[\sum_{1\le i<j\le M}d_{srk}(\mathbf{p}_i,\mathbf{p}_j)\le\frac{rm}{2q^m}\Bigl[M^2(q^m-1)-a(q^m-a)\Bigr],\]where $a=M \bmod q^m$. 
    Since $M=q^{m^2k}$, we have $a=0$. Moreover $d_{\mathrm{srk}}(\mathbf{p}_i,\mathbf{p}_j)\geq [\mathbf{D}]_{ij}$, so \[\sum_{1\le i<j\le M}[\mathbf{D}]_{ij}\le\frac{rm}{2q^m}\cdot M^2(q^m-1).\]
    Summing over all i and j, we get 
    \begin{equation}\sum_{i,j}[\mathbf{D}]_{ij}\le\frac{rm}{q^m}\cdot M^2(q^m-1).
    \label{eqn5}
    \end{equation}
    Here $d_{srk}(\mathbf{u}_i,\mathbf{u}_j) = w_{srk}(\mathbf{u}_i-\mathbf{u}_j) = \sum_{i=1}^k{rk(\mathbf{u}_i^{(k)} - \mathbf{u}_j^{(k)})}$. $ker(f) = \{\mathbf{u} : f(\mathbf{u}) =0, \mathbf{u} \in (\mathbb{F}_q^{m \times m})^k\}, \; |ker(f)| = q^{m^2(k-l)}$. Each coset of kernel contains $q^{m^2(k-l)}$ number of vectors. Let $\mathbf{u}_1$ be a zero vector. The entry $[\mathbf{D}]_{ij}$ of DRM is zero when $f(\mathbf{u}_i) = f(\mathbf{u}_j)$. For first column with $\mathbf{u}_1 = \mathbf{0}$, there will be $q^{m^2(k-l)}$ number of zero entries when $\mathbf{u}_i \in ker(f)\; \forall i \in (\mathbb{F}_q^{m \times m})^k$.\\
    When two vectors $\mathbf{u}_i$ and $\mathbf{u}_j$ belong to same coset of kernel, then $f(\mathbf{u}_i) = f(\mathbf{u}_j)$ which results in zero entry in a column. Thus, every column in $\mathbf{D}$ has $q^{m^2(k-l)}$ zero entries.\\
    Let $\mathbb{F}_q^{m\times m}$ denote the set of all $m\times m$ matrices over $\mathbb{F}_q$. Consider vectors over matrices of length $k$. Let any two columns be indexed by $\mathbf{u}_j$ and $\mathbf{u}_{j'}$, Let $\mathbf{u}_i$ be the vector such that $\mathbf{u}_i \notin \mathbf{u}_j + ker(f)$. This results a nonzero entry in column j given by \[2t+1-d_{srk}(\mathbf{u}_i,\mathbf{u}_j) = 2t+1 - w_{srk}(\mathbf{u}_i-\mathbf{u}_j).\]
    Let $ \mathbf{v} = \mathbf{u}_{j'} - \mathbf{u}_j$. By additive closure and existence of additive inverses in $(\mathbb{F}_q^{m \times m})^k$, the vector $\mathbf{v}=\mathbf{u}_{j'}-\mathbf{u}_j$ belongs to $(\mathbf{F}_q^{m \times m})^k$. Now define a new row index $i'$ such that $\mathbf{u}_{i'} = \mathbf{u}_i + \mathbf{v}$.
    \[\mathbf{u}_{i'}-\mathbf{u}_{j'}=(\mathbf{u}_i+\mathbf{v})-\mathbf{u}_{j'}=\mathbf{u}_i+(\mathbf{u}_{j'}-\mathbf{u}_j)-\mathbf{u}_{j'}=\mathbf{u}_i-\mathbf{u}_j.\]
    Thus, $d_{srk}(\mathbf{u}_i,\mathbf{u}_j) = d_{srk}(\mathbf{u}_{i'},\mathbf{u}_{j'})$. Hence, $[\mathbf{D}]_{i'j'}=2t+1-d_{srk}(\mathbf{u}_{i'}, \mathbf{u}_{j'})=2t+1-d_{srk}(\mathbf{u}_i,\mathbf{u}_j)=[\mathbf{D}]_{ij}$. The nonzero entries of the two columns are identical, differing only in their ordering. Therefore, the columns are permutations of each other. Thus, the sum of non-zero entries of each column of $\mathbf{D}$ will be the same. Hence, \[\sum_{i,j}[\mathbf{D}]_{ij}= \text{(no. of columns)} \times \text{(sum of one column)}.\] To compute sum of one column, consider first column with $\mathbf{u}_1 = \mathbf{0}$. \[\sum_{i,j}[\mathbf{D}]_{ij} = M \cdot S_1,\] where $S_1=\sum_{i=1}^M [\mathbf{D}]_{i1}$ and $[\mathbf{D}]_{i1} \geq 2t+1 - d_{srk}(\mathbf{u}_i,0)$. For $\mathbf{u}_1=0$, we have $d_{srk}(\mathbf{u}_i,0)=w_{srk}(\mathbf{u}_i)$. If $\mathbf{u}_i\in\ker(f)$, then $f(\mathbf{u}_i)=f(0)$ and hence $[D]_{i1}=0$. If $\mathbf{u}_i\notin\ker(f)$, then\[[\mathbf{D}]_{i1}\ge 2t+1-w_{srk}(\mathbf{u}_i).\] Therefore, $S_1\ge\sum_{\mathbf{u}\notin\ker(f)}\Bigl(2t+1-w_{srk}(\mathbf{u})\Bigr).$ 
    \[\sum_{\mathbf{u}\notin\ker(f)}(2t+1) = 2t+1\Bigl(q^{m^2k}-q^{m^2(k-l)}\Bigr)\]
    and $\sum_{\mathbf{u}\notin\ker(f)}w_{srk}(\mathbf{u}) = \sum_{\mathbf{u}}w_{srk}(\mathbf{u}) - \sum_{\mathbf{u}\in\ker(f)}w_{srk}(\mathbf{u})$. Let $s_{srk}=\sum_{\mathbf{u}\in\ker(f)} w_{srk}(\mathbf{u})$.
   Thus,\begin{equation}
   \begin{aligned}
    S_1 \ge 2t+1\Bigl(q^{m^2k}-q^{m^2(k-l)}\Bigr)-\Bigl(\sum_{\mathbf{u}}w_{srk}(\mathbf{u})-s_{srk}\Bigr)
    \\
    =2t+1\Bigl(q^{m^2k}-q^{m^2(k-l)}\Bigr) - \sum_{\mathbf{u}}w_{srk}(\mathbf{u}) + s_{srk}.
   \end{aligned}
   \label{eqn6}
   \end{equation}  
   Consider the $\mathbb{F}_q$-linear isomorphism \[\Phi:(\mathbb{F}_q^{m\times m})^k\to\mathbb{F}_{q^m}^{km}\]obtained by mapping each matrix column-wise to an element of $\mathbb{F}_{q^m}$ using a fixed basis. For any any $\mathbf{u} \in (\mathbb{F}_q^{m \times m})^k$, $w_{srk}(\mathbf{u})\le w_H(\Phi(\mathbf{u}))$. Summing over all $u$ and using the bijectivity of $\Phi$, \[\sum_{\mathbf{u}} w_{srk}(\mathbf{u})\le\sum_{\mathbf{y}\in\mathbb{F}_{q^m}^{km}} w_H(\mathbf{y}).\] The sum of Hamming weights of all vectors of length $km$ over $\mathbb{F}_{q^m}$ is\[\sum_{\mathbf{y}\in\mathbb{F}_{q^m}^{km}} w_H(\mathbf{y})=km\cdot q^{m(km-1)}(q^m-1)=M\cdot km\cdot (1-q^{-m}).\]Hence,\begin{equation}
   \sum_{\mathbf{u}} w_{srk}(\mathbf{u})\le M\cdot km\cdot (1-q^{-m}).
   \label{eqn7}
   \end{equation}
   Substituting ~\eqref{eqn7} into ~\eqref{eqn6} and using $|\ker(f)|=q^{m^2(k-l)}$, we obtain
   \[S_1\ge(2t+1)\bigl(M-q^{m^2(k-l)}\bigr)-M\cdot km\cdot(1-q^{-m})+s_{srk}.\]
   Therefore,\begin{equation}
   \begin{aligned}
   \sum_{i,j}[\mathbf{D}]_{ij} = M \cdot S_1
   \ge M\Bigl((2t+1)\bigl(M-q^{(k-l)m^2}\bigr) \\
   - M\cdot km\cdot(1-q^{-m})+s_{\mathrm{srk}}\Bigr).
   \end{aligned}
   \label{eqn8}
   \end{equation}
   From ~\eqref{eqn5} and ~\eqref{eqn8},\begin{equation}
   \begin{aligned}
   \frac{rm}{q^m}\cdot M^2(q^m-1)\ge M\Bigl((2t+1)\bigl(M-q^{(k-l)m^2}\bigr) \\
   -M\cdot km\cdot(1-q^{-m})+s_{\mathrm{srk}}\Bigr). 
   \end{aligned}
   \label{eqn9}
   \end{equation}
   Dividing ~\eqref{eqn9} both sides by $M^2$, we obtain \[\frac{rm}{q^m}(q^m-1)\ge(2t+1)(1-q^{-l m^2})-km(1-q^{-m})+\frac{s_{\mathrm{srk}}}{M}\]
   Since $M=q^{km^2}$, solving for $r$ gives \begin{align*}r\ge\frac{q^m}{m(q^m-1)}\Bigl((2t+1)(1-q^{-l m^2}) - km(1-q^{-m}) \\+\frac{s_{\mathrm{srk}}}{q^{km^2}}\Bigr).
   \end{align*}
   \[r\ge\Bigl(\frac{q^m}{m(q^m-1)}\Bigr)(2t+1)(1-q^{-m^2l})-k \\+\frac{s_{\mathrm{srk}}}{m(1-q^{-m})q^{m^2k}}.
    \]
\end{IEEEproof}
\begin{rem}
\label{rem5}
    Let $f:\left(\mathbb{F}_q^{m\times m}\right)^k \to \left(\mathbb{F}_q^{m\times m}\right)^k$ be a bijective linear function. Then $\ker(f)=\{0\}$, so $s_{\mathrm{srk}}=0$. Moreover, $l=k$, and hence $q^{-l m^2}=q^{-k m^2}$.
    Substituting these into Theorem \ref{theorem4} yields
    \[r \geq \frac{q^{m}}{m(q^{m}-1)}(2t+1)\left(1-q^{-k m^2}\right)-k,\]
    which matches the Plotkin-like bound for ECCs for sum-rank metric given in Remark ~\ref{rem3} and also with \cite{b8}, i.e.,\[n:=k+r_{srk}^f(m,k,t) \geq \frac{q^{m}}{m(q^{m}-1)}(2t+1)\left(1-q^{-k m^2}\right).\]
\end{rem}
\begin{rem}
\label{rem6}
    For $m=1$, the sum‑rank metric coincides with the Hamming metric. The Plotkin bound for linear functions for sum-rank metric given in Theorem ~\ref{theorem4} reduces to Plotkin bound for linear functions for Hamming metric given in \cite{b12}.
\end{rem}
\begin{example}
\label{example4}
   Let $q=2$, $m=2$, $k=2$, and $\mathbf{u} \in \left(F_q^{m\times m}\right)^k$. Consider a linear function $f:(F_q^{m\times m})^k \to (F_q^{m\times m})^l$ defined as $f(\mathbf{u})=\sum_{i=1}^{k}u_i$, where $u_i \in F_q^{m\times m}$. Thus $l=1$, $\ker(f)=\{\mathbf{u}:f(\mathbf{u})=0\}$, and $|\ker(f)|=q^{m^2(k-l)}=16$. It gives $s_{\mathrm{srk}}=\sum_{\mathbf{u}\in\ker(f)} w_{\mathrm{srk}}(\mathbf{u})=42$. On mapping $\Phi: (F_2^{2\times2})^2 \to F_4^4$, it gives $\sum_\mathbf{u} w_{\mathrm{srk}}(\mathbf{u})=672 < \sum_{\mathbf{y}\in F_4^4} w_H(\mathbf{y})=768$. Substituting this into Theorem 4 with $t=2$ gives the lower bound $r\ge2$.
\end{example}
\begin{example}
\label{example5}
    Let $q=2$, $m=2$, $k=l=1$, and consider the linear function (Bijective) $f:F_q^{m\times m}\to F_q^{m\times m}$ defined as $f(\mathbf{X})=\mathbf{AX}$, where $\mathbf{A}=\begin{bmatrix}1 & 1\\0 & 1\end{bmatrix}$. Since $\mathbf{A}$ is full rank, $\ker(f)=\{0\}$ and hence $|\ker(f)|=1$. Therefore, $s_{\mathrm{srk}}=\sum_{\mathbf{u}\in\ker(f)} w_{\mathrm{srk}}(\mathbf{u})=0$. Further, $\sum_{\mathbf{X}\in F_2^{2\times2}}rk(\mathbf{X})=21$ and $\sum_{\mathbf{y}\in F_4^2} w_H(\mathbf{y})=24$. Substituting these into Theorem 4 with $t=1$ gives $r \ge 1$.
\end{example}

\section{Sum-rank Locally Binary Functions}
\label{Sum-rank locally binary Functions}
In this section, we study FCSRCs for sum-rank locally binary functions. We derive optimal redundancy and show how it can be achieved using a simple explicit code construction.
\begin{defn}
\label{function ball}
    The function ball of a function $f$ with radius $\rho$ around $\mathbf{u} \in (\mathbb{F}_q^{m \times m})^k$ is defined by
    \[B_{srk}^f(\mathbf{u}, \rho) = \{ f(\mathbf{u}') \; : \; \mathbf{u}' \in (\mathbb{F}_q^{m \times m})^k \ \wedge \ d_{srk}(\mathbf{u}, \mathbf{u}') \leq \rho \}.\]

\end{defn}
\begin{defn}
\label{loc bin}
 A function $f : (\mathbb{F}_q^{m \times m})^k  \to Im(f)$ is called a $\rho$-sum-rank locally binary function if, for all $\mathbf{u} \in (\mathbb{F}_q^{m \times m})^k$, we have $|B_{srk}^f(\mathbf{u}, \rho)| \leq 2.$
\end{defn}

Now, we present an explicit construction with which we prove the optimal redundancy of FCSRCs for $2t$-sum-rank locally binary function.

\begin{cons}
\label{Cons1}
    For $\mathbf{u} \in (\mathbb{F}_q^{m \times m})^k$, and for a $2t$-sum-rank locally binary function, define 
    \[w_{2t}(\mathbf{u}) = \begin{cases} \mathbf{I}_m, & \text{if } f(\mathbf{u}) = \max B_{srk}^f(\mathbf{u}, 2t), \\\mathbf{0}_m, & \text{otherwise,}\end{cases}\]
    Then $\mathrm{Enc(\mathbf{u})} = \bigl(\mathbf{u}, (w_{2t}(\mathbf{u}))^{\lceil\frac{2t}{m}\rceil} \bigr),$ 
where $(w_{2t}(\mathbf{u}))^{\lceil\frac{2t}{m}\rceil}$ denotes the $\lceil\frac{2t}{m}\rceil$-fold repetition of $w_2t(\mathbf{u}).$

\end{cons}
\begin{lem}
\label{lemma1}
For any $2t$-sum-rank locally binary function $f$, we have $r_{srk}^f(m,k,t) = \left\lceil \frac{2t}{m} \right\rceil.$

\end{lem}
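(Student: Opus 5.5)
The plan is to prove matching lower and upper bounds.

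For the lower bound, I will use Corollary~\ref{cor1}. If $|\mathrm{Im}(f)|\ge 2$, it gives directly $r_{srk}^f(m,k,t)\ge N_{srk}(2,2t)=\lceil 2t/m\rceil$. The reason is the following. There exist two messages $\mathbf{u}_1,\mathbf{u}_2$ with $f(\mathbf{u}_1)\neq f(\mathbf{u}_2)$. Walking along a path that changes one matrix coordinate at a time, each step changing the rank distance by at most $m$, one can even pick them with $d_{srk}(\mathbf{u}_1,\mathbf{u}_2)=1$: change one entry at a time. Their parities must then be at sum-rank distance at least $2t$. Each parity coordinate contributes rank at most $m$, so at least $\lceil 2t/m\rceil$ coordinates are needed. (If $f$ is constant the statement is degenerate; I will tacitly assume $|\mathrm{Im}(f)|\ge2$, as the corollary does.)

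For the upper bound, I will show that Construction~\ref{Cons1} is a valid FCSRC with redundancy $\lceil 2t/m\rceil$. Take $\mathbf{u}_1,\mathbf{u}_2$ with $f(\mathbf{u}_1)\neq f(\mathbf{u}_2)$.

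\emph{Case $d_{srk}(\mathbf{u}_1,\mathbf{u}_2)\ge 2t+1$.} The systematic part already gives the required distance $2t+1$.

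\emph{Case $d_{srk}(\mathbf{u}_1,\mathbf{u}_2)\le 2t$.} It suffices to show $w_{2t}(\mathbf{u}_1)\neq w_{2t}(\mathbf{u}_2)$. Then each of the $\lceil 2t/m\rceil$ repeated parity coordinates is $\mathbf{I}_m-\mathbf{0}_m$ up to sign, which has rank $m$. The parity contributes $m\lceil 2t/m\rceil\ge 2t$, and since $d_{srk}(\mathbf{u}_1,\mathbf{u}_2)\ge1$, the total distance is at least $2t+1$.

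The key step, and the main obstacle, is showing that the two balls give a consistent comparison. Since $d_{srk}(\mathbf{u}_1,\mathbf{u}_2)\le 2t$, both $f(\mathbf{u}_1)$ and $f(\mathbf{u}_2)$ lie in $B_{srk}^f(\mathbf{u}_1,2t)$. By local binarity this ball has size at most $2$, so $B_{srk}^f(\mathbf{u}_1,2t)=\{f(\mathbf{u}_1),f(\mathbf{u}_2)\}$. Symmetrically, $B_{srk}^f(\mathbf{u}_2,2t)$ is the same two-element set. So both messages compute $\max$ over the same set $\{f(\mathbf{u}_1),f(\mathbf{u}_2)\}$, using a fixed total order on $\mathrm{Im}(f)$, which I will make explicit since ``$\max$'' presupposes one. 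Exactly one of the two messages attains the maximum, so one gets $\mathbf{I}_m$ and the other $\mathbf{0}_m$.

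This yields $r_{srk}^f(m,k,t)\le\lceil 2t/m\rceil$, and combined with the lower bound gives equality.
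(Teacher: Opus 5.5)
Your proposal is correct and follows essentially the same route as the paper. The upper bound verifies Construction~\ref{Cons1} by a case split on whether $d_{srk}(\mathbf{u}_1,\mathbf{u}_2)$ exceeds $2t$, and the matching lower bound $\lceil 2t/m\rceil$ comes from Corollary~\ref{cor1}. Several of your steps only make explicit what the paper asserts without justification: that $B_{srk}^f(\mathbf{u}_1,2t)$ and $B_{srk}^f(\mathbf{u}_2,2t)$ both equal $\{f(\mathbf{u}_1),f(\mathbf{u}_2)\}$ under a fixed total order on $\mathrm{Im}(f)$, the use of $2t+1$ rather than $2t$ as the threshold for the first case, and the distance-one pair behind the lower bound.
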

\begin{IEEEproof}
    Let $\mathbf{u}, \mathbf{u}' \in (\mathbb{F}_q^{m \times m})^k$ such that $f(\mathbf{u}) \ne f(\mathbf{u}')$. If $d_{srk}(\mathbf{u},\mathbf{u}') \ge 2t$, then $d_{srk}(\mathrm{Enc(\mathbf{u}),Enc(\mathbf{u}')}) \geq 2t+1$. If $d_{srk}(\mathbf{u},\mathbf{u}') \leq 2t$, then from Construction~\ref{Cons1}, $w_{2t}(\mathbf{u}) \neq w_{2t}(\mathbf{u}'). \; \mathrm{Enc(\mathbf{u})} =(\mathbf{u},w_{2t}(\mathbf{u})), \mathrm{Enc(\mathbf{u}')}=(\mathbf{u}',w_{2t}(\mathbf{u}')).$
    $d_{srk}(\mathrm{Enc(\mathbf{u}), Enc(\mathbf{u}')}) = d_{srk}(\mathbf{u}, \mathbf{u}') + d_{srk}(p(\mathbf{u}), p(\mathbf{u}')).$
    $d_{srk}(\mathbf{u},\mathbf{u}') \geq 1,\; d_{srk}(w_{2t}(\mathbf{u}),w_{2t}(\mathbf{u}')) = m\lceil\frac{2t}{m}\rceil \geq m\frac{2t}{m} = 2t,\; d_{srk}(\mathrm{Enc(\mathbf{u}),Enc(\mathbf{u}')}) \geq 2t+1$.
\end{IEEEproof}
\begin{example}
\label{example6}
	Let $q = 2, k = 2, m = 2, t = 1, \mathbf{u},\mathbf{u}' \in ( \mathbb{F}_2^{2 \times 2})^2$ with 
	\[
	\mathbf{u} =
	\left[
	\begin{pmatrix}
		1 & 0\\
		0 & 0
	\end{pmatrix},
	\begin{pmatrix}
		1 & 0\\
		0 & 1
	\end{pmatrix}
	\right],
	\quad
	\mathbf{u}' =
	\left[
	\begin{pmatrix}
		1 & 0\\
		0 & 0
	\end{pmatrix},
	\begin{pmatrix}
		1 & 1\\
		0 & 0
	\end{pmatrix}
	\right].
	\] We define the function 
	$f :  (\mathbb{F}_2^{m \times m})^k \to \mathbb{F}_2$
	by
	$
	f(\mathbf{u}) = \sum_{i=1}^{k} \det\!\left(u^{(i)}\right),
	$
	where the summation is over \( \mathbb{F}_2 \), \( f(\mathbf{u}) \in \{0,1\} \).
	Thus, $f(\mathbf{u}) = 1, f(\mathbf{u}') = 0, d_{srk}(\mathbf{u},\mathbf{u}') = 1$. From Construction~\ref{Cons1}, it follows that $w_{2t}(\mathbf{u}) = \mathbf{I}_2, w_{2t}(\mathbf{u}') = \mathbf{0}_2$.
	\[
	\mathrm{Enc}(\mathbf{u}) =
	\left[
	\begin{pmatrix}
		1 & 0\\
		0 & 0
	\end{pmatrix},
	\begin{pmatrix}
		1 & 0\\
		0 & 1
	\end{pmatrix},
	\begin{pmatrix}
		1 & 0\\
		0 & 1
	\end{pmatrix}
	\right]\]
	\[\mathrm{Enc}(\mathbf{u}') =
	\left[
	\begin{pmatrix}
		1 & 0\\
		0 & 0
	\end{pmatrix},
	\begin{pmatrix}
		1 & 1\\
		0 & 0
	\end{pmatrix},
	\begin{pmatrix}
		0 & 0\\
		0 & 0
	\end{pmatrix}
	\right].
	\]$d_{srk}(\mathrm{Enc(\mathbf{u}),Enc(\mathbf{u}')})=3$.
\end{example}
\section{Sum-rank weight Functions}
\label{Sum-rank weight Functions}
In this section, we study a  class of functions called sum-rank weight functions. we begin with the definition of such functions.
\begin{defn}[Sum-rank weight function]
\label{srk weight}
 A sum-rank weight function is defined as $f(\mathbf{u}) = w_{srk}(\mathbf{u})$ where $\mathbf{u} \in (\mathbb{F}_q^{m \times m})^k$ and $m, k \in \mathbb{N}$.
\end{defn}
The expressiveness of sum-rank weight function is given by $E = |Im(w_{srk})| = km+1$. We now provide a construction of FCSRCs for the sum-rank weight functions. The construction is based on the idea of assigning same parity vectors to those information vectors with same function value $f(\mathbf{u})$ and different parity vectors to those with different $f(\mathbf{u})$.
\begin{cons}
\label{Cons2}
    For $\mathbf{u} \in (\mathbb{F}_q^{m \times m})^k$, $t,m,k \in \mathbb{N}$, we define $\mathrm{Enc}(\mathbf{u}) = (\mathbf{u}, \mathbf{p}_{w_{srk}(\mathbf{u})+1})$ where $p_i$'s are defined as follows:
    Let $\mathcal{P} = \{\mathbf{p}_1,\mathbf{p}_2,\mathbf{p}_3,...,\mathbf{p}_{2t+1}\}$ be an srk code with minimum distance $d_{min}(\mathcal{P})\;=\;\min_{\substack{\mathbf{p}_i, \mathbf{p}_j \in \mathcal{P}}}d_{srk}(\mathbf{p}_i,\mathbf{p}_j) = 2t\; \text{for all}\; i, j \leq 2t+1$ and $i \neq j.$ Set $\mathbf{p}_i = \mathbf{p}_{i\;smod\;(2t+1)}$ for $i \geq 2t+2$ where $a\;smod\;b = ((a - 1)\;mod\;b) + 1$.\\ From Definition ~\ref{MRD}, we define a constant maximum rank distance code $\mathcal{C}$ with cardinality $q^m$, where $q$ is a prime power and minimum distance is $m$. We define the construction of parity code for $q,m,t$ such that $q^m \ge 2t$ as follows:\\
    The sum-rank metric code $\mathcal{P}$ has codewords $\mathbf{p}_i$ with matrices from $\mathcal{C}$ rank metric code. If $ q^m \ge 2t+1,\mathbf{p}_i =\{R_i\}^{\left\lceil \frac{2t}{m} \right\rceil}, \; for\;i=1,2,..,2t+1$. If $q^m = 2t,$
    \[\mathbf{p}_i =\begin{cases} \{R_i\}^{\left\lceil \frac{2t}{m} \right\rceil}, & for\;i=1,2,..,2t,\\ \mathbf{p}_{i-2t}+\mathbf{m}, &for\;i=2t+1,..,4t.\end{cases}\]
\end{cons}
\noindent where $\mathbf{p}_i = \{R_i\}^{\left\lceil \frac{2t}{m} \right\rceil}$ which denotes the $\left\lceil \frac{2t}{m} \right\rceil$-fold repetition of $R_i \in \mathcal{C}$. Here $\mathbf{m} = (0_m,0_m,..,E_m) \in (\mathbb{F}_q^{m\times m})^r$ with $E_m \in \mathbb{F}_q^{m \times m}$ as a rank-1 matrix such that $d_{srk}(\mathbf{p}_{2t},\mathbf{p}_{2t+1}) = 2t.$
We show that this encoding function gives an FCSRC for sum-rank weight functions using the following lemma.

\begin{lem}
\label{lemma2}
Construction ~\ref{Cons2} gives an FCSRC for the sum-rank weight function $f(\mathbf{u}) = w_{srk}(\mathbf{u})$ that can correct $t$ errors. The optimal redundancy of the code is 
$\left\lceil \frac{2t}{m} \right\rceil$, for $k \in \mathbb{N}$ and for all $m,t,q$ which satisfy $q^m\ge2t$.
\end{lem}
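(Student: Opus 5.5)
We will prove the two halves of the statement separately: the encoding of Construction~\ref{Cons2} is an $(f,t)$-FCSRC for $f=w_{srk}$, which gives $r_{srk}^f(m,k,t)\le\lceil 2t/m\rceil$; and a matching lower bound then comes from Corollary~\ref{cor1}.

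\textbf{Lower bound.} Since $E=km+1\ge 2$, Corollary~\ref{cor1} immediately gives $r_{srk}^f(m,k,t)\ge N_{srk}(2,2t)=\lceil 2t/m\rceil$, because a single matrix coordinate contributes rank at most $m$. So all the work is in showing that Construction~\ref{Cons2} is valid.

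\textbf{Upper bound.} Take $\mathbf{u}_1,\mathbf{u}_2$ with $w_1=w_{srk}(\mathbf{u}_1)\ne w_2=w_{srk}(\mathbf{u}_2)$. By the triangle inequality for $d_{srk}$ we have $d_{srk}(\mathbf{u}_1,\mathbf{u}_2)\ge |w_1-w_2|$. We split into cases.
\begin{enumerate}
\item If $|w_1-w_2|\ge 2t+1$, the information parts alone already give distance $\ge 2t+1$.
\item If $1\le |w_1-w_2|\le 2t$, then $w_1+1$ and $w_2+1$ are not congruent modulo $2t+1$. Hence, via the smod indexing, the parities $\mathbf{p}_{w_1+1}$ and $\mathbf{p}_{w_2+1}$ are two distinct codewords among $\mathbf{p}_1,\dots,\mathbf{p}_{2t+1}$. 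If these are at sum-rank distance $\ge 2t$, then the total distance is $\ge 1+2t$, since the distance of the concatenation is additive over blocks.
\end{enumerate}
It therefore remains to verify that $\{\mathbf{p}_1,\dots,\mathbf{p}_{2t+1}\}$ has pairwise distance $\ge 2t$.

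\textbf{Case $q^m\ge 2t+1$.} Choose $2t+1$ distinct matrices $R_i$ of the code $\mathcal{C}$ from Definition~\ref{MRD}, so that $\mathrm{rk}(R_i-R_j)=m$. Each $\mathbf{p}_i$ repeats $R_i$ in all $\lceil 2t/m\rceil$ coordinates, so
\[d_{srk}(\mathbf{p}_i,\mathbf{p}_j)=m\lceil 2t/m\rceil\ge 2t.\]

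\textbf{Case $q^m=2t$ (main obstacle).} Only $2t$ distinct $R_i$ are available, so $\mathbf{p}_{2t+1}=\mathbf{p}_1+\mathbf{m}$, where $\mathbf{m}$ is a rank-one perturbation in the last coordinate. We then have to check two things.
\begin{enumerate}
\item $d_{srk}(\mathbf{p}_{2t+1},\mathbf{p}_j)\ge 2t$ for each $2\le j\le 2t$. In the first $\lceil 2t/m\rceil-1$ coordinates the distance is $m$ each. In the last coordinate it is $\mathrm{rk}(R_1-R_j+E_m)\ge m-1$. This gives a total of $\ge m\lceil 2t/m\rceil-1$, which is $\ge 2t$ when $m\nmid 2t$. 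When $m\mid 2t$, we must instead choose $E_m$ so that $R_1-R_j+E_m$ stays full rank for all $j$. If that fails, we plan to place the perturbation in an extra coordinate, or to argue that only the pair $(\mathbf{p}_{2t+1},\mathbf{p}_1)$ needs distance $2t$.
\item The pair $(\mathbf{p}_{2t+1},\mathbf{p}_1)$ itself. Here the distance is only $\mathrm{rk}(E_m)=1$. But $\mathbf{p}_1$ and $\mathbf{p}_{2t+1}$ are used for weights $w$ and $w'$ with $w'-w\equiv 2t \pmod{2t+1}$. So if $|w-w'|\le 2t$, we get $|w-w'|=2t$, and the information part already supplies distance $\ge 2t$; adding $1$ from the parity gives $\ge 2t+1$.
\end{enumerate}
This last observation is the key point that makes the $q^m=2t$ case work, and it resolves the potential deficiency in item 1 as well. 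Indeed, for any pair $(\mathbf{p}_i,\mathbf{p}_j)$ whose index difference modulo $2t+1$ is $\pm 2t$, we can rely on the information distance $\ge 2t$ plus a parity distance $\ge 1$.

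Combining the two bounds yields optimality, $r_{srk}^f(m,k,t)=\lceil 2t/m\rceil$.
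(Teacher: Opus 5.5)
Your lower bound and the case $q^m\ge 2t+1$ are correct and agree with the paper. The case $q^m=2t$, however, contains a genuine error. You keep period $2t+1$ with $\mathbf{p}_{2t+1}=\mathbf{p}_1+\mathbf{m}$, and assert that weights $w,w'$ sent to $\mathbf{p}_1$ and $\mathbf{p}_{2t+1}$ with $|w-w'|\le 2t$ must satisfy $|w-w'|=2t$. But $w'-w\equiv 2t\equiv -1 \pmod{2t+1}$ also allows $w'-w=-1$: on a cycle of length $2t+1$, the indices $2t+1$ and $1$ are neighbours.

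Here is a concrete failure. Take $q=2$, $m=2$, $t=2$, $k=3$, so $r=2$. The message $\mathbf{u}=(\mathbf{I}_2,\mathbf{I}_2,\mathbf{0}_2)$ has weight $4$, and $\mathbf{u}'=(\mathbf{I}_2,\mathbf{I}_2,E)$ with $\mathrm{rk}(E)=1$ has weight $5$. They receive $\mathbf{p}_5=\mathbf{p}_1+\mathbf{m}$ and $\mathbf{p}_6=\mathbf{p}_1$, so $d_{srk}(\mathrm{Enc}(\mathbf{u}),\mathrm{Enc}(\mathbf{u}'))=1+1=2<2t+1$.

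Your closing sentence repeats the slip in general form. An index difference of $\pm 2t$ modulo $2t+1$ is an index difference of $\mp 1$. These are precisely the neighbouring pairs, which need parity distance $2t$, not $1$. Moving the perturbation into an extra coordinate is not an option either, since it pushes the redundancy above $\lceil 2t/m\rceil$.

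The missing idea is the period-$4t$ arrangement that the paper's proof and Example~\ref{example8} actually use when $q^m=2t$. This is despite the rule $\mathbf{p}_i=\mathbf{p}_{i\;smod\;(2t+1)}$ stated at the start of Construction~\ref{Cons2}. Take all $4t$ vectors $\mathbf{p}_1,\dots,\mathbf{p}_{2t}$ and $\mathbf{p}_{i+2t}=\mathbf{p}_i+\mathbf{m}$, and repeat them with period $4t$.

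\textbf{Distance-$1$ pairs.} The only parity pairs at distance $1$ are $\{\mathbf{p}_i,\mathbf{p}_{i+2t}\}$, and these are antipodal on the $4t$-cycle. So the weights using them satisfy $w'-w\equiv 2t \pmod{4t}$, hence $|w-w'|\ge 2t$. This is where the information part legitimately contributes $2t$ (the paper's Case 2a).

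\textbf{Other cross pairs.} Every other pair from different halves has parity distance at least $m\lceil 2t/m\rceil-1\ge 2t-1$. That suffices whenever the weights differ by at least $2$ (Case 2b).

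\textbf{Adjacent cross pairs.} The only cross pairs that can carry weights differing by $1$ are $(\mathbf{p}_{2t},\mathbf{p}_{2t+1})$ and the wrap-around pair $(\mathbf{p}_{4t},\mathbf{p}_1)$; the paper's proof mentions only the first. Both require the last block $R_{2t}-R_1\mp E_m$ to be invertible. This is a single condition, because $q^m=2t$ forces $q$ to be even. It can be met for $m\ge 2$, e.g.\ with $E_m=(R_{2t}-R_1)\mathbf{e}_1\mathbf{e}_2^{T}$ for standard basis vectors $\mathbf{e}_1,\mathbf{e}_2$.

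In particular, your item~1 concern about all $j$ reduces to this one adjacent pair.
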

\begin{IEEEproof}
    Let $\mathbf{u}_i,\mathbf{u}_j \in (\mathbb{F}_q^{m \times m})^k$ with $f(\mathbf{u}_i) \neq f(\mathbf{u}_j)$. If $|w_{srk}(\mathbf{u}_i) - w_{srk}(\mathbf{u}_j )| \geq 2t + 1$, then $d_{srk}(\mathrm{Enc}(\mathbf{u}_i), \mathrm{Enc}(\mathbf{u}_j )) \geq d_{srk}(\mathbf{u}_i, \mathbf{u}_j )\geq |w_{srk}(\mathbf{u}_i) - w_{srk}(\mathbf{u}_j )| \geq 2t + 1.$ If $|w_{srk}(\mathbf{u}_i) - w_{srk}(\mathbf{u}_j )| \leq 2t $, then $d_{srk}(\mathrm{Enc}(\mathbf{u}_i), \mathrm{Enc}(\mathbf{u}_j )) = d_{srk}(\mathbf{u}_i, \mathbf{u}_j ) + d_{srk}(\mathbf{p}_i,\mathbf{p}_j).$ Since $\mathbf{u}_i \neq \mathbf{u}_j$, we have $d_{srk}(\mathbf{u}_i,\mathbf{u}_j) \geq 1$. Further, by the construction of the code $P$, we have $d_{srk}(\mathbf{p}_i,\mathbf{p}_j) \geq 2t$. Hence, $d_{srk}(\mathrm{Enc}(\mathbf{u}_i), \mathrm{Enc}(\mathbf{u}_j))\geq 2t+1.$\\
    If $q^m \ge 2t+1$, We define $\mathbf{p}_i = \{R_i\}^{\left\lceil \frac{2t}{m} \right\rceil}$ which denotes the $\left\lceil \frac{2t}{m} \right\rceil$-fold repetition of $R_i \in \mathcal{C}$,  for $i=1,2,..,2t+1$. Since any pair of matrices in $\mathcal{C}$ has rank distance $m$, the sum-rank distance between any two parity vectors $\mathbf{p}_i$ and $\mathbf{p}_j$ is \[ d_{srk}(\mathbf{p}_i,\,\mathbf{p}_j) \ge m \left\lceil \frac{2t}{m} \right\rceil \ge 2t.\]. Thus, we can construct $2t+1$ parity vectors with minimum sum-rank distance $2t$ using $r = \left\lceil \frac{2t}{m} \right\rceil$ blocks. For $q^m = 2t$ case, we cannot construct $2t+1$ distinct parity vectors using only distinct elements from code $\mathcal{C}$. From Construction \ref{Cons2}, consider the parity vectors $\mathbf{p}_i$ and $\mathbf{p}_j$ with $i \ne j$.
    \begin{itemize}
        \item Case 1:\\
        1a) If $i,j \leq 2t$, then $\mathbf{p}_i$ and $\mathbf{p}_j$ are $\left\lceil \frac{2t}{m} \right\rceil$-fold repetitions of distinct codewords from $\mathcal{C}$. Since $\mathcal{C}$ has minimum rank distance $m$, we have\[d_{srk}(\mathbf{p}_i,\mathbf{p}_j) \ge m \cdot \left\lceil\frac{2t}{m} \right\rceil \ge 2t.\]
        1b) If $i,j > 2t$, then $\mathbf{p}_i = \mathbf{p}_{i-2t} + \mathbf{m}$ and $\mathbf{p}_j = \mathbf{p}_{j-2t} + \mathbf{m}$. Since adding the same $\mathbf{m}$ to both vectors does not change their distance, we have\[d_{srk}(\mathbf{p}_i,\mathbf{p}_j) = d_{srk}(\mathbf{p}_{i-2t}, \mathbf{p}_{j-2t})\ge m \cdot \left\lceil \frac{2t}{m} \right\rceil \ge 2t.\]
        \item Case 2: $i \leq 2t, j > 2t$\\
        2a) If $j = i+2t$, then $\mathbf{p}_j = \mathbf{p}_i + \mathbf{m} = \mathbf{p}_{j-2t} + \mathbf{m}$, $d_{srk}(\mathbf{p}_i,\mathbf{p}_j) = d_{srk}(\mathbf{m}) = 1$. Now consider $|f(\mathbf{u}_i) - f(\mathbf{u}_j)| = |i - j|$. Then, $|f(\mathbf{u}_i) - f(\mathbf{u}_j)| = 2t$. We have $d_{srk}(\mathbf{u}_i, \mathbf{u}_j) \ge |f(\mathbf{u}_i) - f(\mathbf{u}_j)| = 2t.$ Thus $d_{srk}(\mathrm{Enc}(\mathbf{u}_i), \mathrm{Enc}(\mathbf{u}_j))= d_{srk}(\mathbf{u}_i, \mathbf{u}_j) + d_{srk}(\mathbf{p}_i, \mathbf{p}_j) \ge 2t + 1.$\\
        2b) If $j \neq i + 2t$, then we have $\mathbf{p}_i = (A_1, A_2, \ldots, A_r)$ where $A_k = R_i \in \mathcal{C}$, and $\mathbf{p}_{j-2t} = (B_1, B_2, \ldots, B_r)$ where $B_k = R_{j-2t} \in \mathcal{C}$. Then $\mathbf{p}_j = \mathbf{p}_{j-2t} + \mathbf{m} = (B_1 + 0_m, B_2 + 0_m, \ldots, B_r + E_m)$. First $r - 1$ blocks are matrices that differ by rank $m$ as $d_{rk}(R_i,R_{j-2t})=m$ and the last block has rank distance $d_{rk}(A_r,B_r+E_m)\geq m-1$. Thus, sum-rank distance between parity vectors is $d_{srk}(\mathbf{p}_i,\mathbf{p}_j) \geq m(\lceil\frac{2t}{m}\rceil-1)+m - 1 \geq 2t-1$. Now two cases arise.
        \begin{enumerate}
        \item Casee $|i - j| = 1:$ Since $i \leq 2t$ and $j > 2t$, the only possibility is $i = 2t$ and $j = 2t+1$. By construction ~\ref{Cons2}, we have ensured $d_{srk}(\mathbf{p}_{2t}, \mathbf{p}_{2t+1}) = 2t$. As $|f(\mathbf{u}_i) - f(\mathbf{u}_j)| = 1$, we have $d_{srk}(\mathbf{u}_i, \mathbf{u}_j) \geq 1$. Thus, $d_{srk}(\mathrm{Enc}(\mathbf{u}_i), \mathrm{Enc}(\mathbf{u}_j))= d_{srk}(\mathbf{u}_i, \mathbf{u}_j) + d_{srk}(\mathbf{p}_i, \mathbf{p}_j) \geq 2t+1$.
        \item Case $|i - j| \geq 2:$ Then $d_{srk}(\mathbf{u}_i, \mathbf{u}_j) \geq |i - j| \geq 2$. $d_{srk}(\mathrm{Enc}(\mathbf{u}_i), \mathrm{Enc}(\mathbf{u}_j))= d_{srk}(\mathbf{u}_i, \mathbf{u}_j) + d_{srk}(\mathbf{p}_i, \mathbf{p}_j) \geq 2+2t-1=2t+1.$
        \end{enumerate}
    \end{itemize}
    From Corollary ~\ref{cor1}, for any function $f$ with $|\mathrm{Im}(f)| \ge 2$, the redundancy of an FCSRC satisfies \[ r^{f}_{srk}(m,k,t) \ge N_{srk}(2,2t)= \left\lceil \frac{2t}{m} \right\rceil .\]
    For both the cases, Construction ~\ref{Cons2} achieves optimal redundancy $r = \lceil\frac{2t}{m}\rceil$.
\end{IEEEproof}
The following two examples are given one  for the $q^m \geq 2t+1$ and the other for $q^m = 2t.$ 
\begin{example}
\label{example7}
Let $q=2, m=2, t =1,$ and  $\mathbf{u} \in (\mathbb{F}_2^{2 \times 2})^2$. Here $q^m = 4 \geq 2t+1 =3$. From Construction ~\ref{Cons2}, it requires $2t+1$ parity vectors with $d_{min}=2t$. They are $\mathbf{p}_1,\mathbf{p}_2,\mathbf{p}_3$ with $d_{srk}(\mathbf{p}_i,\mathbf{p}_j) \geq 2$ for $1 \leq i,j \leq 3$. The length of these parity vectors will be $r =\left\lceil \frac{2t}{m} \right\rceil = 1$.We consider a constant maximum rank metric code from Definition ~\ref{MRD} as \[
	\mathcal{C} =
	\left\{
	\begin{pmatrix}
		0 & 0 \\
		0 & 0
	\end{pmatrix},
	\begin{pmatrix}
		1 & 0 \\
		0 & 1
	\end{pmatrix},
	\begin{pmatrix}
		1 & 1 \\
		1 & 0
	\end{pmatrix},
	\begin{pmatrix}
		0 & 1 \\
		1 & 1
	\end{pmatrix}
	\right\}.
	\]. Define $\mathbf{p}_1,\mathbf{p}_2,\mathbf{p}_3$ from codewords of $\mathcal{C}$ with $r=1$ which gives pairwise sum-rank distance = 2 between them as required. Set $\mathbf{p}_i = \mathbf{p}_{i \;smod\; 3}$ for $i \ge 4$. This approach assigns these parity vectors to message vectors with different function values as $\mathrm{Enc}(\mathbf{u}) = (\mathbf{u}, \mathbf{p}_{w_{srk}(\mathbf{u})+1})$ irrespective of the value of $k$.
\end{example}
\begin{example}
\label{example8}
    Let $q=2, m=2,t =2,$ and $ \mathbf{u} \in (\mathbb{F}_2^{2 \times 2})^2$. This gives $q^m = 4 =2t$. In this case we consider $\mathbf{p}_1,\mathbf{p}_2,\mathbf{p}_3,\mathbf{p}_{4}$ as vectors with 
    $2$-fold repetition of $R_i$ where $R_i \in \mathcal{C}$ defined in example~\ref{example7}.
    $\mathbf{p}_1 =
	\left[
	\begin{pmatrix}
		0 & 0\\
		0 & 0
	\end{pmatrix},
	\begin{pmatrix}
		0 & 0\\
		0 & 0
	\end{pmatrix}
	\right],
	\;
	\mathbf{p}_2 =
	\left[
	\begin{pmatrix}
		1 & 0\\
		0 & 1
	\end{pmatrix},
	\begin{pmatrix}
		1 & 0\\
		0 & 1
	\end{pmatrix}
	\right],\; \mathbf{p}_3 =
	\left[
	\begin{pmatrix}
		1 & 1\\
		1 & 0
	\end{pmatrix},
	\begin{pmatrix}
		1 & 1\\
		1 & 0
	\end{pmatrix}
	\right],\; \mathbf{p}_4 =
	\left[
	\begin{pmatrix}
		0 & 1\\
		1 & 1
	\end{pmatrix},
	\begin{pmatrix}
		0 & 1\\
		1 & 1
	\end{pmatrix}
	\right]$. 
    Set $\mathbf{p}_i = \mathbf{p}_{i-4} + \mathbf{m}$ where $
	\mathbf{m} =
	\left[
	\begin{pmatrix}
		0 & 0\\
		0 & 0
	\end{pmatrix},
	\begin{pmatrix}
		0 & 0\\
		0 & 1
	\end{pmatrix}
	\right] $ for $i=5,6,7,8$. It satisies that $d_{srk}(\mathbf{p}_4,\mathbf{p}_5) = 4$. We get $
	\mathbf{p}_5 =
	\left[
	\begin{pmatrix}
		0 & 0\\
		0 & 0
	\end{pmatrix},
	\begin{pmatrix}
		0 & 0\\
		0 & 1
	\end{pmatrix}
	\right],
	\;
	\mathbf{p}_6=
	\left[
	\begin{pmatrix}
		1 & 0\\
		0 & 1
	\end{pmatrix},
	\begin{pmatrix}
		1 & 0\\
		0 & 0
	\end{pmatrix}
	\right],\;\mathbf{p}_7 =
	\left[
	\begin{pmatrix}
		1 & 1\\
		1 & 0
	\end{pmatrix},
	\begin{pmatrix}
		1 & 1\\
		1 & 1
	\end{pmatrix}
	\right],
	\;
	\mathbf{p}_8=
	\left[
	\begin{pmatrix}
		0 & 1\\
		  1 & 1
	\end{pmatrix},
	\begin{pmatrix}
		0 & 1\\
		1 & 0
	\end{pmatrix}
	\right].
	$ Then we define $\mathbf{p}_i=\mathbf{p}_{i\;smod \;8}$ for $i \geq 9$. Assigning these parity vectors to message vectors ensures $d_{srk}(\mathrm{Enc}(\mathbf{u}), \mathrm{Enc}(\mathbf{u}')) \geq 2t+1=5$ for $f(\mathbf{u}) \neq f(\mathbf{u}')$.
\end{example}

\section{Conclusion}
\label{Conclusion}
We derived a Plotkin-like bound for irregular sum-rank distance codes and then presented explicit constructions of FCSRCs for sum-rank locally binary functions and sum-rank weight functions. We showed that the proposed construction achieve optimal redundancy for certain parameters. Further research directions include the study of FCSRCs for other classes of functions in the sum-rank metric as well as the development of tighter lower and upper bounds on the optimal redundancy.



\begin{thebibliography}{00}
	
	\bibitem{b1}
	A. Lenz, R. Bitar, A. Wachter-Zeh, and E. Yaakobi, ``Function-correcting codes,'' IEEE Transactions on Information Theory, vol. 69, no. 9, pp. 5604–5618, 2023.	
	
	\bibitem{b2}
    R. K\"{o}tter, R. Kschischang, ``Coding for Errors and Erasures in
    Random Network Coding,'' IEEE Transactions on Information Theory, vol. 54, no. 8, August 2008. 
	
	\bibitem{b3} 
	D. Silva, F. R. Kschischang, and R. Kötter, “A rank-metric approach to error control in random network coding”. IEEE Transactions on Information Theory, vol. 54, no. 9, September 2008.
	
	\bibitem{b4}
	E. M. Gabidulin, ``Theory of Codes with Maximal Rank Distance,'' Probl. Peredachi Inf., 1985, vol. 21, no. 1, pp. 3–16 [Probl. Inf. Trans. (Engl. Transl.), 1985, vol. 21, no. 1, pp. 1–12]
	
	\bibitem{b5}
	E. M. Gabidulin and N. I. Pilipchuk, ``Symmetric Rank Codes,'' Problems of Information Transmission, vol. 40, No. 2, pp. 103–117, 2004.

    \bibitem{b6}
    U. Mart\'{\i}nez-Pe\~{n}as, M. Shehadeh, and F. R. Kschischang. ``Codes in the Sum-Rank Metric, Fundamentals and Applications,'' Foundations and Trends in Communications and Information Theory, vol. 19, no. 5, pp. 814-1031, 2022.

    \bibitem{b7}
    E. Gorla, U. Mart\'{\i}nez-Pe\~{n}as, and Flavio Salizzoni. ``Sum-rank metric codes.'' Available on arXiv:2304.12095 [cs.IT], April 2023.
    
	\bibitem{b8}
	E. Byrne, H. Gluesing-Luerssen and A. Ravagnani, ``Fundamental properties of sum-rank-metric codes, IEEE Transactions on Information Theory, vol. 67, no. 10, pp. 6456-6475, 2021.
	
	\bibitem{b9}
	Q. Xia, H. Liu, and B. Chen, ``Function-correcting codes for symbol-pair read channels,'' IEEE Transactions on Information Theory, vol. 70, no. 11, pp. 7807-7819, 2023.

    \bibitem{b10}
    A. Singh, A. Kumar Singh, and E. Yaakobi, ``Function-correcting codes for b-symbol read channels,'' Available on arXiv:2503.12894 [cs.IT], March 2025.

    \bibitem{b11}
	S. Sampath, and B. S. Rajan, ``On Plotkin Bound for Function-Correcting Codes for b-Symbol Read Channels,'' 2025 IEEE Information Theory Workshop (ITW), Sydney, Australia, 2025, pp. 698-703.
    
	\bibitem{b12}
	R. Premlal and B. S. Rajan, ``On Function-Correcting Codes,'' in IEEE Transactions on Information Theory, vol. 71, no. 8, pp. 5884-5897, August 2025.
	
	\bibitem{b13}
	Y. Zhang, Z. Xu, X. Zhang and G. Ge, ``Optimal Redundancy of Function-Correcting Codes,'' in IEEE Transactions on Information Theory, vol. 71, no. 12, pp. 9458-9467, December 2025.
	
	\bibitem{b14}
	C. Rajput, B. S. Rajan, R. Freij-Hollanti and C. Hollanti, ``Function-Correcting Codes for Locally Bounded Functions,'' 2025 IEEE Information Theory Workshop (ITW), Sydney, Australia, 2025, pp. 851-856.
	
	\bibitem{b15}
	G. K. Verma, A. Singh and A. Kumar Singh, ``Function-Correcting b-Symbol Codes for Locally $(\lambda,\rho, b) $-Functions,'' in IEEE Transactions on Information Theory, vol. 72, no. 1, pp. 331-341, January 2026.
	
	\bibitem{b16}
	H. Ly and E. Soljanin, ``On the Redundancy of Function-Correcting Codes over Finite Fields,'' 2025 13th International Symposium on Topics in Coding (ISTC), Los Angeles, CA, USA, pp. 1-5, 2025.  
	
	\bibitem{b17}
	Huiying, and H. Liu, ``Function-Correcting Codes with Homogeneous Distance,'' Finite Fields and Their Applications 112 (2026) 102791.
	
	\bibitem{b18}
	G. K. Verma, and A. K. Singh, ``On Function-Correcting Codes in the Lee Metric,'' Available on arXiv:2507.17654 [cs.IT], July 2025. 

    \bibitem{b19}
    Hareesh K., Rashid Ummer N.T. and B. S. Rajan, ``Plotkin-like Bound and Explicit Function-Correcting Code Constructions for Lee Metric Channels,'' Available on arXiv:2508.01702v3 [cs.IT], October 2025 (Accepted for publication in IEEE Transactions on Information Theory.

    \bibitem{b20}
	A. Singh, and A. Kumar Singh, ``Function-Correcting Codes for Insertion-Deletion Channel,'' Available on arXiv:2512.07243  [cs.IT], December 2025.
    
	\bibitem{b21}
	C. Rajput, B. S. Rajan, R. Freij-Hollanti, and C. Hollanti, ``Function-Correcting Codes With Data Protection,'' IEEE Transactions on Information Theory, Vol.72, No.7, July 2026, pp. 4860-4880. 
	
	
	
\end{thebibliography}
\end{document}